\documentclass[a4paper, 11pt]{article}
\usepackage[utf8]{inputenc}
\usepackage[ngermanb, english]{babel}
\usepackage{amsmath}
\usepackage{amssymb}
\usepackage{amsthm}
\usepackage{tikz-cd}
\usepackage{bm}
\usepackage{commath}
\usepackage{mathrsfs}
\usepackage{mathtools}
\usepackage{slashed}
\usepackage{tensor}
\usepackage{parskip}
\usepackage{epstopdf}
\usepackage{graphicx}
\usepackage[nottoc]{tocbibind}
\usepackage{babelbib}
\usepackage{hyperref}
\usepackage[top=2.5cm, left=2.5cm, right=2.5cm, bottom=2.5cm]{geometry}

\theoremstyle{plain}
\newtheorem{thm}{Theorem}[section]
\newtheorem{lem}[thm]{Lemma}
\newtheorem{prop}[thm]{Proposition}

\newtheorem{col}[thm]{Corollary}

\theoremstyle{definition}
\newtheorem{defn}[thm]{Definition}

\newtheorem{ter}[thm]{Terminology}

\newtheorem{exmp}[thm]{Example}

\theoremstyle{remark}
\newtheorem{rem}[thm]{Remark}

\providecommand{\sectionref}[1]{Section~\ref{#1}}

\providecommand{\eqnref}[1]{Equation~\eqref{#1}}
\providecommand{\eqnsref}[1]{Equations~\eqref{#1}}

\providecommand{\exmpsaref}[2]{Examples~\ref{#1} and \ref{#2}}
\providecommand{\defnsaref}[2]{Definitions~\ref{#1} and \ref{#2}}

\DeclareSymbolFont{extraitalic}      {U}{zavm}{m}{it}
\DeclareMathSymbol{\Qoppa}{\mathord}{extraitalic}{161}
\DeclareMathSymbol{\qoppa}{\mathord}{extraitalic}{162}
\DeclareMathSymbol{\Stigma}{\mathord}{extraitalic}{167}
\DeclareMathSymbol{\Sampi}{\mathord}{extraitalic}{165}
\DeclareMathSymbol{\sampi}{\mathord}{extraitalic}{166}
\DeclareMathSymbol{\stigma}{\mathord}{extraitalic}{168}

\newlength{\graphlength}
\setlength{\graphlength}{2.75cm}
\newlength{\cgreenlength}
\setlength{\cgreenlength}{0.5\graphlength}

\newcommand{\deDonder}{{d \negmedspace D \mspace{-2mu}}}

\newcommand{\one}{\mathbb{I}}
\newcommand{\coone}{\hat{\mathbb{I}}}
\newcommand{\Q}{{\mathbf{Q}}}
\newcommand{\HQ}{\mathcal{H}_{\Q}}
\newcommand{\HQd}{\boldsymbol{\mathcal{H}}_{\Q}}
\newcommand{\EQ}{{\mathcal{E}_{\Q}}}
\newcommand{\RQ}{\mathcal{R}_{\Q}}
\newcommand{\RQO}{{\mathcal{R}_{\Q}^{[0]}}}
\newcommand{\RQI}{{\mathcal{R}_{\Q}^{[1]}}}
\newcommand{\AQ}{\mathcal{A}_{\Q}}

\newcommand{\DQ}[1]{\mathcal{D} \left ( #1 \right )}
\newcommand{\EG}[1]{\mathcal{E} \left ( #1 \right )}
\newcommand{\DQprime}[1]{\mathcal{D}^\prime \left ( #1 \right )}

\newcommand{\CQ}{\boldsymbol{\mathcal{C}}_{\Q}}
\newcommand{\PQ}{\boldsymbol{\mathcal{P}}_{\Q}}
\newcommand{\FQ}{\mathcal{F}_{\Q}}
\newcommand{\GQ}{\mathcal{G}_{\Q}}
\newcommand{\GQc}{\boldsymbol{\mathcal{G}}_{\Q}}
\newcommand{\GQt}{\boldsymbol{\mathcal{G}}^\text{tree}_{\Q}}
\newcommand{\GQl}{\boldsymbol{\mathcal{G}}^\text{loop}_{\Q}}
\newcommand{\GQo}{\boldsymbol{\mathcal{G}}^{\sigma, \tau}_{\Q}}
\newcommand{\LQ}{\mathcal{L}_{\Q}}
\newcommand{\TQ}{\mathcal{T}_{\Q}}

\newcommand{\VQ}{\mathcal{V}_{\Q}}
\newcommand{\VQd}{\boldsymbol{\mathcal{V}}_{\Q}}

\newcommand{\Det}[1]{\operatorname{Det} \left ( #1 \right )}

\newcommand{\sym}[1]{\operatorname{Sym} \left ( #1 \right )}
\newcommand{\aut}[1]{\operatorname{Aut} \left ( #1 \right )}
\newcommand{\res}[1]{\operatorname{Res} \left ( #1 \right )}

\newcommand{\cplgrd}[1]{\operatorname{CplGrd} \left ( #1 \right )}
\newcommand{\vtxgrd}[1]{\operatorname{VtxGrd} \left ( #1 \right )}

\newcommand{\dt}[1]{\operatorname{Det} \left ( #1 \right )}

\newcommand{\sdd}[1]{\omega \left ( #1 \right )}

\newcommand{\mult}{m}
\newcommand{\ring}{\Bbbk}

\newcommand{\regFR}{{\Phi_\mathscr{E}^\varepsilon}}

\newcommand{\countertermsymbol}{S_\mathscr{R}^\regFR}
\newcommand{\counterterm}[1]{\countertermsymbol \left ( #1 \right )}
\newcommand{\renscheme}[1]{\mathscr{R} \left ( #1 \right )}

\newcommand{\id}{\operatorname{Id}}
\newcommand{\precombgreen}{\mathfrak{x}}
\newcommand{\combgreen}{\mathfrak{X}}
\newcommand{\rescombgreen}{\mathfrak{X}}
\newcommand{\precombgreenc}{\mathfrak{y}}
\newcommand{\combgreenc}{\mathfrak{Y}}
\newcommand{\rescombgreenc}{\mathfrak{Y}}
\newcommand{\combcharge}{\mathfrak{Q}}

\newcommand{\ZvQ}{\mathbb{Z}^{\mathfrak{v}_\Q}}
\newcommand{\ZqQ}{\mathbb{Z}^{\mathfrak{q}_\Q}}
\newcommand{\surject}{\to \!\!\!\!\! \to}

\newcommand{\enter}{\vspace{\baselineskip}}
\newcommand{\mathbbit}[1]{{\mspace{1mu} \italicbox{$\mathbb{#1}$} \mspace{2mu}}}

\newcommand{\bbL}{\mathbbit{L}}
\newcommand{\bbI}{\mathbbit{I} \mspace{1mu}}
\newcommand{\bbT}{\mathbbit{T} \mspace{1mu}}

\newcommand{\GGamma}{\boldsymbol{\Gamma}}
\newcommand{\ggamma}{\boldsymbol{\gamma}}
\newcommand{\llambda}{\boldsymbol{\lambda}}

\newcommand{\gtproj}{\vcenter{\hbox{\scalebox{1.5}{\(\blacktriangleright\)}}}}
\newcommand{\gfproj}{\vcenter{\hbox{\scalebox{1.5}{\(\vartriangleright\)}}}}
\newcommand{\oscan}{\vcenter{\hbox{\scalebox{1.5}{\(\vert\)}}}}

\newcommand{\graph}[2]{\vcenter{\hbox{\includegraphics[width=#1]{#2}}}}

\newsavebox{\foobox}
\newcommand{\italicbox}[2][.25]
{%
	\mbox
	{%
		\sbox{\foobox}{#2}%
		\hskip\wd\foobox
		\pdfsave
		\pdfsetmatrix{1 0 #1 1}%
		\llap{\usebox{\foobox}}%
		\pdfrestore
	}%
}

\makeatletter
\newcommand{\subalign}[1]{%
  \vcenter{%
    \Let@ \restore@math@cr \default@tag
    \baselineskip\fontdimen10 \scriptfont\tw@
    \advance\baselineskip\fontdimen12 \scriptfont\tw@
    \lineskip\thr@@\fontdimen8 \scriptfont\thr@@
    \lineskiplimit\lineskip
    \ialign{\hfil$\m@th\scriptstyle##$&$\m@th\scriptstyle{}##$\crcr
      #1\crcr
    }%
  }
}
\makeatother

\makeatletter
\newcommand{\oset}[3][0ex]{%
  \mathrel{\mathop{#3}\limits^{
    \vbox to#1{\kern-2\ex@
    \hbox{$\scriptstyle#2$}\vss}}}}
\makeatother

\makeatletter
\newcommand{\uset}[3][0ex]{%
  \mathrel{\mathop{#3}\limits_{
    \vbox to#1{\kern-7\ex@
    \hbox{$\scriptstyle#2$}\vss}}}}
\makeatother

\title{\textsc{Cancellation Identities and Renormalization}}
\author{David Prinz\footnote{Max Planck Institute for Mathematics, Bonn; e-mail:\ \href{mailto:prinz@mpim-bonn.mpg.de}{prinz@mpim-bonn.mpg.de}, website: \href{https://davidprinz.org}{davidprinz.org}}}
\date{December 1, 2025}

\begin{document}

\maketitle

\begin{abstract}
	We construct a manifest gauge invariant renormalization framework by first introducing a perturbative BRST Feynman graph complex and then combining it with Connes--Kreimer renormalization theory: To this end, we first formalize the cancellation identities of 't Hooft (1971), which were used to prove the absence of gauge anomalies in Quantum Yang--Mills theories. Specifically, we start with some reasonable axioms of (generalized) gauge theories and then present the most general version of cancellation identities ensuring transversality. Then, we construct a perturbative BRST Feynman graph complex, whose cohomology groups consist of transversal invariant linear combinations of Feynman graphs. We prove that the cohomology groups are zero in odd degree and generated by connected combinatorial Green's functions in even degree, with a corresponding number of external ghost edges. Ultimately, we then formulate the renormalization Hopf algebra on these cohomology groups, which directly links to Hopf subalgebras for multiplicative renormalization. Finally, we exemplify the developed theory with Quantum Yang--Mills theory and (effective) Quantum General Relativity.
\end{abstract}

\section{Introduction} \label{sec:introduction}

A central problem in the quantization of (generalized) gauge theories is the avoidance of gauge anomalies. More specifically, if possible, the renormalization operation needs to be constructed in such a way that the multiplicatively renormalized Lagrange density is still invariant under the (generalized) gauge symmetry. In this article, we propose a manifestly gauge invariant renormalization framework based on the \emph{Hopf algebraic renormalization} by Connes and Kreimer \cite{Kreimer_Hopf_Algebra,Connes_Kreimer_NG,Broadhurst_Kreimer} and the \emph{cancellation identities} of 't Hooft and Veltman \cite{tHooft,tHooft_Veltman,Cvitanovic}. For us, a (generalized) gauge theory is a theory which admits a \emph{BRST symmetry} on its space of fields: Notably, this includes General Relativity with its diffeomorphism invariance, cf.\ \cite{Faizal,Prinz_5,Prinz_6} and the references therein. In particular, this setting is sufficient to study the transversality of the corresponding Feynman rules by means of a longitudinal projection operator and thus allows for a diagrammatic description thereof, cf.\ \cite{Prinz_4,Kissler,Prinz_7} and the cited sources. Explicitly, the BRST symmetry is a supersymmetry transformation generated by the BRST operator \(S\), which sends a field \(\varphi\) to an infinitesimal gauge transformation thereof in the direction of a ghost field \(\theta\), i.e.\ \(S \varphi \coloneq \delta_\theta \varphi\), and then extended to the ghost field such that it satisfies the nilpotency condition \(S^2 = 0\). Notably, a theory \(\Q\) is BRST invariant if its Lagrange density \(\LQ\) is essentially BRST closed, i.e.\ \(S \LQ \simeq_\text{TD} 0\), where \(\simeq_\text{TD}\) means equality modulo total derivatives. Now, for multiplicative renormalization we start with the Lagrange density and decompose it into its different monomials \(\LQ \equiv \sum_{r \in \RQ} \mathcal{M}^r\), where \(r\) denotes the amplitude and \(\mathcal{M}^r\) its corresponding monomial. Then, we associate to each such amplitude \(r\) its counterterm \(Z\)-factor \(Z^r\), such that the multiplicatively renormalized Lagrange density is given by \(\LQ^\texttt{R} \equiv \sum_{r \in \RQ} Z^r \mathcal{M}^r\). Crucially, if \(\Q\) possesses a non-linear gauge symmetry, the different monomials \(\mathcal{M}^r\) are related to each other and a general choice for the counterterms \(Z^r\) breaks this gauge invariance. Specifically, they need to satisfy the corresponding \emph{Slavnov--Taylor identities}, which are given by the requirement \(S \LQ^\texttt{R} \overset{!}{\simeq}_\text{TD} 0\). These relations between the counterterms are induced by the symmetry of the theory but also the choice of the renormalization condition. The results of this article aim to understand when a theory satisfies such symmetries and, if yes, provide a construction for a gauge invariant renormalization operation therefor.

To this end, we use the \emph{Hopf algebraic renormalization theory} by Connes and Kreimer \cite{Kreimer_Hopf_Algebra,Connes_Kreimer_NG,Broadhurst_Kreimer,Connes_Kreimer_0,Connes_Kreimer_1,Connes_Kreimer_2}. The startpoint for this framework is a Quantum Field Theory \(\Q\), given via a Lagrange density \(\LQ\) and its set of  \emph{one-particle irreducible Feynman graphs} \(\GQ\). From this, a Hopf algebra \(\HQ\) is constructed as follows: The algebra part is given via the power series algebra \(\HQ \coloneq \mathbb{Q}[[\GQ]]\) with multiplication given via disjoint union and unit given via the empty graph. Crucially, the coproduct is constructed to capture the subdivergence structure of Feynman graphs:
\begin{equation}
	\Delta \, : \quad \HQ \to \HQ \otimes_\mathbb{Q} \HQ \, , \quad \Gamma \mapsto \sum_{\gamma \in \DQ{\Gamma}} \gamma \otimes_\mathbb{Q} \Gamma / \gamma \, ,
\end{equation}
where the cograph \(\Gamma / \gamma\) is defined by shrinking the internal edges of \(\gamma\) in \(\Gamma\) to new vertices for each connected component of \(\gamma\) and \(\DQ{\Gamma}\) denotes the set of superficially divergent subgraphs of \(\Gamma\). Then, Feynman rules correspond to \emph{characters} \(\Phi \colon \HQ \to \AQ\), i.e.\ algebra morphisms, from this Hopf algebra to a suitable target algebra \(\AQ\), e.g.\ an algebra of formal integral expressions. In particular, the module of such maps can be turned into a \emph{character group} \(G^{\HQ}_{\AQ}\). Crucially, the renormalization scheme provides then the input to split the target algebra into a regular and a singular part, i.e.\ \(\AQ \cong_\mathscr{R} \AQ^\text{reg} \oplus \AQ^\text{sing}\). This allows to construct the subgroup of renormalized characters and provides unique maps \(\Phi^\text{reg}\) and \(\Phi^\text{sing}\), which are the \emph{renormalized Feynman rules} and the \emph{counterterm map}, respectively. Notably, the \(Z\)-factors are then the image of the counterterm map under the respective combinatorial Green's function. Since this framework is mathematically very rigid, it allows to work out the precise obstructions for multiplicative renormalization. So far, there have been the following attempts to combine this setup with gauge theories: First, Kreimer himself observed that the insertion operator \(B_+\) is a Hochschild one-cocycle if and only if the insertion graph is a specific gauge invariant linear combination of graphs \cite{Kreimer_Anatomy}. Next, van Suijlekom showed that the Ward--Takahashi identities in Quantum Electrodynamics, the Slavnov--Taylor identities in Quantum Yang--Mills theories and general gauge symmetries correspond to Hopf ideals inside the renormalization Hopf algebra \cite{vSuijlekom_QED,vSuijlekom_QCD,vSuijlekom_BV}. Building on these results, the author showed that this correspondence can be constructed for any \emph{quantum gauge symmetries}, theories with multiple coupling constants and a transversal structure as well as for super- and non-renormalizable Quantum Field Theories \cite{Prinz_3}: This makes it also suitable to study (effective) Quantum General Relativity in the sense of \cite{Kreimer_QG1,Prinz_PhD}. A further perspective was constructed building on the parametric representation of Feynman graphs: In addition to the two well-known \emph{Symanzik polynomials} a third graph polynomial was introduced, the so-called \emph{Corolla polynomial} \cite{Kreimer_Yeats,Kreimer_Sars_vSuijlekom,Sars_PhD}: This polynomial allows to construct, for a given connected \(\phi^3\)-theory graph, a so-called \emph{Corolla differential}, which constructs the corresponding Yang--Mills theory integrand. Notably, this approach also generalizes to spontaneously broken gauge theories \cite{Prinz_1}, Quantum Electrodynamics \cite{Golz_PhD} and graph and cycle cohomology \cite{Kreimer_Sars_vSuijlekom,Berghoff_Knispel}. The present work contains similar ideas, but develops them in a slightly different context, which allows to directly connect them to the well-known \emph{BRST symmetry} \cite{Becchi_Rouet_Stora_1,Becchi_Rouet_Stora_2,Becchi_Rouet_Stora_3,Tyutin}. Additionally, in a follow-up project, the precise relation with the more involved and more powerful \emph{BV formalism} will be studied \cite{Batalin_Vilkovisky_1,Batalin_Vilkovisky_2}.

For this purpose, we study and formalize the \emph{cancellation identities} introduced by 't Hooft and Veltman, which they originally used to prove the Slavnov--Taylor identities of Quantum Yang--Mills theories and its renormalizability \cite{tHooft,tHooft_Veltman,Cvitanovic,tHooft_Veltman_Nobel_Prize,Taylor,Slavnov}, building on the earlier works of Ward and Takahashi in the realm of Quantum Electrodynamics \cite{Ward,Takahashi}. More specifically, they are a graphical way to study how the unphysical \emph{longitudinal modes} travel through gauge boson propagators and vertices: Notably, they relate longitudinal modes to the respective ghost propagators and vertices. Using this argument recursively shows that the Green's functions are indeed transversal, provided the external particles are physical, i.e.\ transversal and on-shell. In present article we formalize and generalize them: Despite the intellectual interest, this also prepares the framework for (effective) Quantum General Relativity, which possesses all graviton vertex-valances and --- depending on the chosen definition of the graviton field --- either all ghost vertex-valences or just the three-valent ghost-vertex.

In particular, we discuss the specific situation of Quantum Yang--Mills theory in \exref{exmp:qym}, given by the Lagrange density
\begin{equation}
\begin{split}
	\mathcal{L}_\text{QYM} & \coloneq \mathcal{L}_\text{YM} + \mathcal{L}_\text{GF} + \mathcal{L}_\text{Ghost} \\ & \phantom{:} \equiv - \eta^{\mu \nu} \eta^{\rho \sigma} \delta_{a b} \left ( \frac{1}{4 \mathrm{g}^2} F^a_{\mu \rho} F^b_{\nu \sigma} + \frac{1}{2 \xi} \big ( \partial_\mu A^a_\nu \big ) \big ( \partial_\rho A^b_\sigma \big ) \right ) \dif V_\eta \\
	& \phantom{\coloneq} + \eta^{\mu \nu} \left ( \frac{1}{\xi} \overline{c}_a \left ( \partial_\mu \partial_\nu c^a \right ) + \mathrm{g} \tensor{f}{^a _b _c} \overline{c}_a \left ( \partial_\mu \big ( c^b A^c_\nu \big ) \right ) \right ) \dif V_\eta \, ,
\end{split}
\end{equation}
where \(F^a_{\mu \nu} := \mathrm{g} \big ( \partial_\mu A^a_\nu - \partial_\nu A^a_\mu \big ) - \mathrm{g}^2 \tensor{f}{^a _b _c} A^b_\mu A^c_\nu\) is the local curvature form of the gauge boson \(A^a_\mu\). Furthermore, \(\dif V_\eta := \dif t \wedge \dif x \wedge \dif y \wedge \dif z\) denotes the Minkowskian volume form. Additionally, \(\eta^{\mu \nu} \partial_\mu A^a_\nu \equiv 0\) is the Lorenz gauge fixing functional and \(\xi\) the gauge fixing parameter. Finally, \(c^a\) and \(\overline{c}_a\) are the gauge ghost and gauge antighost, respectively. In addition, we discuss the more involved situation of (effective) Quantum General Relativity in \exref{exmp:qgr}, given by the Lagrange density
\begin{equation}
\begin{split}
	\mathcal{L}_\text{QGR} & \coloneq \mathcal{L}_\text{GR} + \mathcal{L}_\text{GF} + \mathcal{L}_\text{Ghost} \\ & \phantom{:} \equiv - \frac{1}{2 \varkappa^2} \left ( \sqrt{- \Det{g}} R + \frac{1}{2 \zeta} \eta^{\mu \nu} \deDonder^{(1)}_\mu \deDonder^{(1)}_\nu \right ) \dif V_\eta \\ & \phantom{\coloneq} - \frac{1}{2} \eta^{\rho \sigma} \left ( \frac{1}{\zeta} \overline{C}^\mu \left ( \partial_\rho \partial_\sigma C_\mu \right ) + \overline{C}^\mu \left ( \partial_\mu \big ( \tensor{\Gamma}{^\nu _\rho _\sigma} C_\nu \big ) - 2 \partial_\rho \big ( \tensor{\Gamma}{^\nu _\mu _\sigma} C_\nu \big ) \right ) \right ) \dif V_\eta \, ,
\end{split}
\end{equation}
where \(R := g^{\nu \sigma} \tensor{R}{^\mu _\nu _\mu _\sigma}\) is the Ricci scalar (with \(\tensor{R}{^\rho _\sigma _\mu _\nu} := \partial_\mu \tensor{\Gamma}{^\rho _\nu _\sigma} - \partial_\nu \tensor{\Gamma}{^\rho _\mu _\sigma} + \tensor{\Gamma}{^\rho _\mu _\lambda} \tensor{\Gamma}{^\lambda _\nu _\sigma} - \tensor{\Gamma}{^\rho _\nu _\lambda} \tensor{\Gamma}{^\lambda _\mu _\sigma}\) the Riemann tensor). Again, \(\dif V_\eta := \dif t \wedge \dif x \wedge \dif y \wedge \dif z\) denotes the Minkowskian volume form, which is related to the Riemannian volume form \(\dif V_g\) via \(\dif V_g \equiv \sqrt{- \Det{g}} \dif V_\eta\). Additionally, \(\deDonder^{(1)}_\mu := \eta^{\rho \sigma} \Gamma_{\mu \rho \sigma} \equiv 0\) is the linearized de Donder gauge fixing functional and \(\zeta\) the gauge fixing parameter. Finally, \(C_\mu\) and \(\overline{C}^\mu\) are the graviton-ghost and graviton-antighost, respectively. We refer to \cite{Prinz_4,Prinz_2,Prinz_8} for further details on the perturbative approach to (effective) Quantum General Relativity using the same conventions.

Explicitly, this article is organized as follows: In \sectionref{sec:conn-hopf-alg-ren}, we start with a review and generalization of the Connes--Kreimer theory for \emph{connected} Feynman graphs, by in particular generalizing the coproduct-identities in \propref{prop:coproduct_cgreensfunctions}. Then, in \sectionref{sec:formalized-cancellation-identities}, we review and formalize the cancellation identities of 't Hooft and Veltman, notably for theories with an arbitrary vertex-valence in \thmref{thm:gen-ci}. Next, in \sectionref{sec:pBRST-Feynman-graph-complex}, we implement these formalized cancellation identities into a Feynman graph complex, which we call \emph{pBRST Feynman graph complex} due to its structural similarities with the BRST complex. In particular, we show that its cohomology groups are generated by connected restricted combinatorial Green's functions in \thmref{thm:pBRST-cohomology}. Additionally, in \sectionref{sec:derived-renormalization-theory} we construct the renormalization Hopf algebras on said cohomology group generators and emphasize the relation to Hopf subalgebras for multiplicative renormalization in \colref{col:derived-renormalization-theory}. Finally, in \sectionref{sec:conclusion}, we close with a summary and propose follow-up projects, namely to study the relation of the present constructions to the BV formalism and apply it to UV completions of (effective) Quantum General Relativity.

\section{Connected Hopf algebraic renormalization} \label{sec:conn-hopf-alg-ren}

In this section, we provide the necessary background of Connes--Kreimer renormalization theory \cite{Kreimer_Hopf_Algebra,Connes_Kreimer_0}, building on the more detailed presentations of \cite[Section 2]{Prinz_3} and \cite[Section 3]{Prinz_PhD}. In particular, since we aim to discuss (generalized) gauge theories, we reformulate the framework in terms of \emph{connected Feynman graphs}, rather than \emph{one-particle irreducible (1PI) Feynman graphs}: This is required from the corresponding \emph{Slavnov--Taylor identities} with their corresponding \emph{cancellation identities}, which we will discuss in the following section.

\enter

\begin{defn}[Combinatorial data of a QFT] \label{defn:combinatorial-data-qft}
	Let \(\Q\) be a Quantum Field Theory given via the Lagrange density \(\LQ\), then we associate the following sets to it:
	\begin{itemize}
		\item The residue set \(\RQ \coloneq \RQO \sqcup \RQI\), where \(\RQO\) is the set of interactions (vertex colorings) and \(\RQO\) the set of particles (edge colorings)
		\item The amplitude set \(\AQ \coloneq \RQ \sqcup \mathcal{Q}_{\Q}\), where \(\RQ\) is the residue set and \(\mathcal{Q}_{\Q}\) the set of pure quantum corrections, i.e.\ particle interactions which are only possible via trees or graphs
		\item The connected Feynman graph set \(\GQc \coloneq \GQt \sqcup \GQl\), where \(\GQt\) is the set of tree Feynman graphs and \(\GQl\) the set of connected Feynman graphs with at least one loop
		\item The space of fields \(\FQ\), each of which is represented via a propagator in \(\RQI\)
	\end{itemize}
	We emphasize that usually the following constructions are based on the set \(\GQ \subset \GQc\) of \emph{one particle irreducible (1PI)} Feynman graphs,\footnote{Feynman graphs that remain connected after removal of any of the internal edges, also called \emph{bridge-free} in the mathematics literature.} but (generalized) gauge theories require us to work with \emph{connected} Feynman graphs instead, cf.\ \sectionref{sec:formalized-cancellation-identities}.
\end{defn}

\enter

\begin{defn}[Properties of Feynman graphs] \label{defn:properties-feynman-graphs}
	Given a Feynman graph \(\Gamma \in \GQc\), we associate the following data to it:
	\begin{itemize}
		\item Its vertex set \(V \left ( \Gamma \right )\), its edge set \(E \left ( \Gamma \right )\) and its set of external half-edges \(E_\text{Ext} \left ( \Gamma \right )\), i.e.\ edges that are connected to only one internal vertex
		\item Its residue \(\res{\Gamma} \in \AQ\), obtained by shrinking all internal edges to a new vertex
		\item Its symmetry factor \(\sym{\Gamma} \coloneq \left \vert \aut{\Gamma} \right \vert \in \mathbb{N}_+\), as the rank of its automorphism group
		\item Its superficial degree of divergence \(\sdd{\Gamma} \in \mathbb{Z}\)
		\item Its set of superficially divergent subgraphs \(\DQ{\Gamma} \subset \bigsqcup \GQc\)
		\item Its loop-grading number \(\operatorname{LoopGrd} \left ( \Gamma \right ) \in \mathbb{N}_0\), its vertex-grading multi-vector \(\vtxgrd{\Gamma} \in \ZvQ\), counting the vertex-types, and its coupling-grading multi-vector \(\cplgrd{\Gamma} \in \ZqQ\), counting the corresponding coupling constants
	\end{itemize}
	We refer to \cite[Section 2]{Prinz_3} for the detailed definitions of the above quantities.
\end{defn}
\enter

\begin{defn}[The renormalization Hopf algebra] \label{defn:renormalization_hopf_algebra}
	Given the situation of \defnsaref{defn:combinatorial-data-qft}{defn:properties-feynman-graphs}, we define the following Hopf algebra \(\HQ \equiv (\VQ, m, \one, \Delta, \coone, S)\), called \emph{renormalization Hopf algebra}:
	\begin{itemize}
		\item The algebra part is given via the power series algebra \(\VQ \coloneq \mathbb{Q} [[ \GQc ]]\),\footnote{Strictly speaking, Connes and Kreimer defined it as the polynomial algebra \(\mathbb{Q}[\GQc]\) and then completed it with respect to the loop-grading, which is equivalent to our definition.} with multiplication \(m \colon \HQ \otimes_\mathbb{Q} \HQ \to \HQ\) via disjoint union and unit \(\one \colon \mathbb{Q} \hookrightarrow \HQ\) via the empty graph
		\item The coproduct is given as follows, using the set of superficially divergent subgraphs \(\DQ{\Gamma}\) from \defnref{defn:properties-feynman-graphs}:
			\begin{equation}
				\Delta \, : \quad \HQ \to \HQ \otimes_\mathbb{Q} \HQ \, , \quad \Gamma \mapsto \sum_{\gamma \in \DQ{\Gamma}} \gamma \otimes_\mathbb{Q} \Gamma / \gamma \, ,
			\end{equation}
			where the cograph \(\Gamma / \gamma\) is defined by shrinking the internal edges of \(\gamma\) in \(\Gamma\) to new vertices for each connected component of \(\gamma\) with corresponding counit \(\coone \colon \HQ \surject \mathbb{Q}\) via the map sending all non-empty graphs to zero and the empty graph to its prefactor
		\item Finally, the antipode is recursively defined via the normalization \(S \left ( \one \right ) \coloneq \one\) and on non-trivial graphs as follows:
			\begin{equation}
				S \, : \quad \HQ \to \HQ \, , \quad \Gamma \mapsto - \Gamma - \sum_{\DQprime{\Gamma}} S \left ( \gamma \right ) \Gamma / \gamma \, ,
			\end{equation}
			where the set \(\DQprime{\Gamma}\) is the reduced set of superficially divergent subgraphs, i.e.\ the set \(\DQ{\Gamma}\) without \(\one\) and \(\Gamma\)
	\end{itemize}
	In the following, we will omit the ground field from the tensor product, i.e.\ set \(\otimes \coloneq \otimes_\mathbb{Q}\). Finally, we remark that especially in the context of quantum gauge theories the above construction can be ill-defined, which requires the notion of an `associated renormalization Hopf algebra' \cite[Subsection 3.3]{Prinz_2}: Basically, it is sometimes required to slightly redefine the set of superficially divergent subgraphs \(\DQ{\Gamma}\) such that the quotient \(\Gamma / \gamma\) is well-defined for all \(\gamma \in \DQ{\Gamma}\).
\end{defn}

\enter

\begin{defn}[Convolution product] \label{defn:convolution_product}
	Let \(\ring\) be a ring, \(A\) a \(\ring\)-algebra with multiplication \(m_A \colon A \otimes_\ring A \to A\) and unit \(\one_A \colon \ring \hookrightarrow A\) and \(C\) a \(\ring\)-coalgebra with coproduct \(\Delta_C \colon C \to C \otimes_\ring C\) and counit \(\coone_C \colon C \surject \ring\). Then we can turn the \(\ring\)-module \(\text{Hom}_{\ring-\mathsf{Mod}} \left ( C , A \right )\) of \(\ring\)-linear maps from \(C\) to \(A\) into a \(\ring\)-algebra as follows: Given \(f, g \in \text{Hom}_{\ring-\mathsf{Mod}} \left ( C , A \right )\), we define the \emph{convolution product} via
	\begin{subequations}
	\begin{align}
		f \star g & \coloneq \mult_A \circ \left ( f \otimes g \right ) \circ \Delta_C
		\intertext{with unit given by}
		\mathbf{1}_\star & \coloneq \one_A \circ \coone_C \, .
		\intertext{This definition extends trivially if \(A\) or \(C\) possesses additionally a bi- or Hopf algebra structure. In particular, if \(C\) possesses an antipode \(S \colon C \to C\), we can complement this monoid to the \emph{character group} \(G^C_A\) via the inversion}
		f^{\star -1} & \coloneq f \circ S \, .
	\end{align}
	\end{subequations}
	The convolution product is commutative, if \(C\) is cocommutative and \(A\) is commutative.
\end{defn}

\enter

\begin{defn}[Projection to divergent graphs] \label{defn:projection_divergent_graphs}
	Given the situation of \defnref{defn:renormalization_hopf_algebra}, we define the projection to superficially divergent graphs via
	\begin{equation}
		\Omega \, : \quad \GQc \to \GQc \, , \quad \Gamma \mapsto \begin{cases} \Gamma & \text{if \(\sdd{\Gamma} \geq 0\)} \\ 0 & \text{else, i.e.\ \(\sdd{\Gamma} < 0\)} \end{cases}
	\end{equation}
	and then extend it additively and multiplicatively to \(\HQ\). Additionally, we also use the shorthand-notation
	\begin{equation}
		\overline{\mathfrak{G}} \coloneq \Omega \left ( \mathfrak{G} \right ) \, .
	\end{equation}
	We remark that this definition is useful for coproduct identities related to Hopf subalgebras for multiplicative renormalization in the context of super- or non-renormalizable Quantum Field Theories, cf.\ \cite[Proposition 4.2]{Prinz_3} and \propref{prop:coproduct_cgreensfunctions}.
\end{defn}

\enter

\begin{defn}[(Restricted) combinatorial 1PI Green's functions] \label{def:combgreenipi}
	Let \(\Q\) be a QFT, \(\AQ\) the set of its amplitudes and \(\GQ\) the set of its Feynman graphs. Given an amplitude \(r \in \AQ\),\footnote{Later, we will oftentimes write \((i,j,k)\) instead of \(r\) to indicate an amplitude with \(i\) external gauge bosons, \(j\) external ghosts and \(k\) external matter particles.} we set
	\begin{align}
		\precombgreen^r & \coloneq \sum_{\substack{\Gamma \in \GQ \\ \res{\Gamma} = r}} \frac{1}{\sym{\Gamma}} \Gamma
	\intertext{and then define the combinatorial Green's function with amplitude \(r\) as the following sum:}
		\combgreen^r & \coloneq \begin{cases} \one + \precombgreen^r & \text{if \(r \in \RQ^{[0]}\)} \\ \one - \precombgreen^r & \text{if \(r \in \RQ^{[1]}\)} \\ \precombgreen^r & \text{else, i.e.\ \(r \in \mathcal{Q}_\Q\)} \end{cases} \label{eqn:combgreen}
	\intertext{Furthermore, we denote the restriction of \(\combgreen^r\) to one of the gradings \(\mathbf{g}\) from \defnref{defn:properties-feynman-graphs} via}
		\rescombgreen^r_\mathbf{g} & \coloneq \eval{\combgreen^r}_{\mathbf{g}} \, .
	\end{align}
\end{defn}

\enter

\begin{defn}[(Restricted) combinatorial connected Green's functions] \label{def:combgreenc}
	Given the situation of \defnref{def:combgreenipi}, we set
	\begin{align}
		\precombgreenc^r & \coloneq \sum_{\substack{\Gamma \in \GQc \\ \res{\Gamma} = r}} \frac{1}{\sym{\Gamma}} \Gamma
	\intertext{and then define the combinatorial Green's function with amplitude \(r\) as the following sum:}
		\combgreenc^r & \coloneq \begin{cases} \one + \precombgreenc^r & \text{if \(r \in \RQ\)} \\ \precombgreenc^r & \text{else, i.e.\ \(r \in \mathcal{Q}_\Q\)} \end{cases} \label{eqn:combgreenc}
	\intertext{Furthermore, we denote the restriction of \(\combgreenc^r\) to one of the gradings \(\mathbf{g}\) from \defnref{defn:properties-feynman-graphs} via}
		\rescombgreenc^r_\mathbf{g} & \coloneq \eval{\combgreenc^r}_{\mathbf{g}} \, .
	\end{align}
\end{defn}

\enter

\begin{lem} \label{lem:combgreenc-vs-combgreenipi}
	Given the situation of \defnsaref{def:combgreenipi}{def:combgreenc} and let \(r \in \AQ\) be an amplitude residue. Then, the 1PI and connected combinatorial Green's functions are respectively related as follows:
	\begin{equation} \label{eqn:combgreenc-combgreenipi}
		\combgreenc^r_\mathbf{g} \equiv \begin{cases} \displaystyle \eval{\left ( \frac{1}{\combgreen^r} \right )}_{\mathbf{g}} \equiv \eval{\sum_{k = 0}^\infty \left ( \precombgreen^r \right )^k}_{\mathbf{g}} & \text{if \(r \in \RQI\)} \\ \\ \displaystyle \eval{\left ( \frac{\combgreen^r}{\prod_{e \in E \left ( r \right )} \combgreen^e} \right )}_{\mathbf{g}} & \text{else, i.e.\ \(r \in \RQO \sqcup \mathcal{Q}_\Q\)} \end{cases}
	\end{equation}
\end{lem}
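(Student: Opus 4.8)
The plan is to prove both cases by the unique decomposition of a connected Feynman graph into its one-particle-irreducible (1PI) constituents along its bridges, organised order by order in the chosen grading \(\mathbf{g}\), which renders every occurring sum finite. I would prove the propagator case first, since the vertex and quantum-correction case reduces to it once the external legs have been amputated.

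For a propagator residue \(r \in \RQI\) the external structure consists of exactly two half-edges of the particle type \(r\). Hence cutting every bridge of type \(r\) in a connected graph \(\Gamma\) with \(\res{\Gamma} = r\) exhibits it \emph{uniquely} as a linear chain \(\gamma_1 - \gamma_2 - \cdots - \gamma_k\) of 1PI two-point blocks \(\gamma_i\) with \(\res{\gamma_i} = r\), joined by \(k-1\) internal edges of type \(r\); conversely every ordered tuple of such blocks reassembles into one connected chain, with the empty chain contributing the unit \(\one\) (the case \(k=0\)). Reorganising \(\precombgreenc^r\) by the number \(k\) of blocks, the contribution at fixed \(k\) reproduces \(\left ( \precombgreen^r \right )^k\), so that \(\combgreenc^r_\mathbf{g} \equiv \eval{\sum_{k=0}^\infty \left ( \precombgreen^r \right )^k}_{\mathbf{g}}\). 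Since \(\combgreen^r \equiv \one - \precombgreen^r\) and \(\precombgreen^r\) carries no constant term, this is exactly the geometric series \(\eval{\left ( 1 / \combgreen^r \right )}_{\mathbf{g}}\), which settles the first case.

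For a vertex or quantum-correction residue \(r \in \RQO \sqcup \mathcal{Q}_\Q\) I would apply the same bridge decomposition, now with respect to the external legs \(e \in E \left ( r \right )\): a connected graph with residue \(r\) decomposes into a 1PI core of residue \(r\) together with a connected two-point insertion dressing each external leg. This yields the factorisation \(\combgreenc^r \equiv \combgreen^r \cdot \prod_{e \in E \left ( r \right )} \combgreenc^e\), and substituting the already established propagator identity \(\combgreenc^e \equiv \eval{\left ( 1 / \combgreen^e \right )}_{\mathbf{g}}\) for each leg gives precisely the claimed quotient \(\eval{\left ( \combgreen^r / \prod_{e \in E \left ( r \right )} \combgreen^e \right )}_{\mathbf{g}}\).

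The main obstacle I anticipate is twofold. First, one must justify that, for the residues at hand, the \emph{only} one-particle-reducible structure is the dressing of external legs — i.e.\ that no reducible arrangement of several internal cores contributes at the same residue \(r\); this is a valence/residue argument that also pins down which amplitudes belong in \(\RQO\sqcup\mathcal{Q}_\Q\) versus being genuinely composite. Second, and more delicate, is the bookkeeping of symmetry factors: I must verify that \(\sym{\Gamma}\) of an assembled chain or dressed core factorises into the product of the symmetry factors of its blocks, i.e.\ that \(\aut{\Gamma}\) splits into the automorphisms of the individual 1PI pieces with no spurious extra symmetry once the external legs are held fixed, and that the combinatorial multiplicities produced when isomorphic blocks repeat are exactly those implicit in the powers \(\left ( \precombgreen^r \right )^k\) and the products \(\prod_e \combgreenc^e\). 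I would carry out this matching at a fixed grading \(\mathbf{g}\), where all tuples have bounded length, so that the identification of \(\combgreenc^r_\mathbf{g}\) with the generating series on the right-hand side becomes a finite, checkable bijection with matching weights.
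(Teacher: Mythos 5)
Your first case is correct and is in essence the paper's entire proof: the paper disposes of the lemma in one line by ``applying the formal geometric series to the respective \(\combgreen^e\)'s'', which is exactly your decomposition of a connected two-point graph into a path of 1PI two-point blocks, giving \(1/\combgreen^r \equiv \sum_{k \geq 0} \left ( \precombgreen^r \right )^k\); your symmetry-factor concern is the standard orbit count and, with the external legs held fixed, poses no real problem. For a residue with exactly three external legs your second case also goes through as you describe, since a bridge-decomposition tree with three leaves has exactly one trivalent node and otherwise only degree-two nodes, so the only reducible structure there is indeed external-leg dressing.

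The genuine gap is precisely the first obstacle you flag, and the ``valence/residue argument'' you defer to cannot close it in the stated generality. For a residue \(r\) with four or more external legs, in any theory that also possesses lower-valent vertices (both Quantum Yang--Mills theory and Quantum General Relativity do), the bridge decomposition of a connected graph with \(\res{\Gamma} = r\) may contain several non-propagator 1PI cores joined by bridges: already the \(s\)-, \(t\)- and \(u\)-channel trees built from two three-valent vertices have residue equal to the four-point amplitude, and analogous multi-core graphs occur at every loop order and in every fixed grading \(\mathbf{g}\). These graphs contribute to \(\combgreenc^r_\mathbf{g}\) but are not produced by dressing the external legs of a single 1PI core, so the factorisation \(\combgreenc^r \equiv \combgreen^r \prod_{e \in E(r)} \combgreenc^e\) does not follow from your step two as stated; one either has to restrict to residues with at most three external legs, where your argument is complete, or replace the single core by the full sum over trees of 1PI cores with dressed internal and external lines. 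Be aware that the paper's own one-line proof rests on the same leg-dressing picture and does not address this point either, so your route is faithful to the paper's; but as a self-contained proof the step you postponed is where all the content lies, and it is an actual obstruction rather than bookkeeping.
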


\begin{proof}
	\eqnref{eqn:combgreenc-combgreenipi} follows immediately by applying the formal geometric series to the respective \(\combgreen^e\)'s.
\end{proof}

\enter

\begin{defn}[(Restricted) combinatorial charges] \label{defn:combcharge}
	Let \(v \in \RQ^{[0]}\) be a vertex residue, then we define its combinatorial charge \(\combcharge^v\) via
	\begin{align}
		\combcharge^v & \coloneq \frac{\combgreen^v}{\prod_{e \in E \left ( v \right )} \sqrt{\combgreen^e}} \, ,
	\intertext{where \(E \left ( v \right )\) denotes the set of all edges attached to the vertex \(v\). Furthermore, we denote the restriction of \(\combcharge^v\) to one of the gradings \(\mathbf{g}\) from \defnref{defn:properties-feynman-graphs} via}
		\combcharge^v_\mathbf{g} & \coloneq \eval{\combcharge^v}_{\mathbf{g}} \, .
	\end{align}
\end{defn}

\enter

\begin{lem} \label{lem:combcharges-combgreenc}
	Given the situation of \defnsaref{def:combgreenc}{defn:combcharge}, the combinatorial charges can be also expressed using connected combinatorial Green's functions as follows:
	\begin{equation}
		\combcharge^v \coloneq \frac{\combgreenc^v}{\prod_{e \in E \left ( v \right )} \sqrt{\combgreenc^e}}
	\end{equation}
\end{lem}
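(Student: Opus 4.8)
The plan is to derive the claimed identity directly from the defining formula for $\combcharge^v$ in \defnref{defn:combcharge} by substituting the ``1PI versus connected'' dictionary established in \lemref{lem:combgreenc-vs-combgreenipi}. The key observation is that a vertex residue $v \in \RQO$ and each of its attached edges $e \in E \left ( v \right ) \subset \RQI$ fall under precisely the two cases distinguished in \eqnref{eqn:combgreenc-combgreenipi}, so that both $\combgreenc^v$ and each $\combgreenc^e$ can be re-expressed in terms of their 1PI counterparts. Since \defnref{defn:combcharge} already writes $\combcharge^v$ in terms of $\combgreen^v$ and the $\combgreen^e$, it then suffices to check that the substitutions match.

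Concretely, I would first treat the edge factors in the denominator: for each $e \in E \left ( v \right )$ the edge case of \lemref{lem:combgreenc-vs-combgreenipi} gives $\combgreenc^e \equiv 1 / \combgreen^e$, so that
\[
	\prod_{e \in E \left ( v \right )} \sqrt{\combgreenc^e} \equiv \prod_{e \in E \left ( v \right )} \frac{1}{\sqrt{\combgreen^e}} \equiv \frac{1}{\prod_{e \in E \left ( v \right )} \sqrt{\combgreen^e}} \, .
\]
Next I would treat the vertex factor using the case $v \in \RQO$ of the same lemma, namely $\combgreenc^v \equiv \combgreen^v \big / \prod_{e \in E \left ( v \right )} \combgreen^e$. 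Substituting both expressions into the claimed right-hand side and cancelling the $\combgreen^e$-powers yields
\[
	\frac{\combgreenc^v}{\prod_{e \in E \left ( v \right )} \sqrt{\combgreenc^e}} \equiv \frac{\combgreen^v}{\prod_{e \in E \left ( v \right )} \combgreen^e} \cdot \prod_{e \in E \left ( v \right )} \sqrt{\combgreen^e} \equiv \frac{\combgreen^v}{\prod_{e \in E \left ( v \right )} \sqrt{\combgreen^e}} \, ,
\]
which is exactly $\combcharge^v$ by \defnref{defn:combcharge}. This establishes the lemma.

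The computation itself is purely formal, so the only point requiring genuine care --- and the one I would treat as the main, if mild, obstacle --- is the well-definedness of the reciprocals and square roots as elements of the grading-completed power series algebra $\mathbb{Q} \left [ [ \GQc ] \right ]$. Each of $\combgreen^v$ and $\combgreen^e$ has normalized leading term $\one$ (the summands $\precombgreen^v, \precombgreen^e$ carry at least one loop and hence no grading-zero contribution), so $1 / \combgreen^e$, $\sqrt{\combgreen^e}$ and the product over $E \left ( v \right )$ all exist uniquely via the geometric and binomial expansions; as emphasized already in \lemref{lem:combgreenc-vs-combgreenipi}, these expansions are finite in each fixed grading $\mathbf{g}$. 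Since multiplication, inversion and the formal square root commute with restriction to a fixed grading, the two displays above hold gradingwise and therefore as identities of the full formal series. I would state this grading-compatibility explicitly to justify manipulating the infinite sums, after which the chain of equalities is immediate.
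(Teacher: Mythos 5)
Your proof is correct and follows essentially the same route as the paper: the paper's proof is a one-line appeal to \lemref{lem:combgreenc-vs-combgreenipi}, and you have simply spelled out that substitution --- $\combgreenc^e \equiv 1/\combgreen^e$ for the edge factors and $\combgreenc^v \equiv \combgreen^v / \prod_{e \in E(v)} \combgreen^e$ for the vertex factor --- together with the cancellation that recovers \defnref{defn:combcharge}. Your added remarks on formal invertibility, square roots and grading-compatibility are a reasonable elaboration of what the paper leaves implicit.
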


\begin{proof}
	This follows directly from \lemref{lem:combgreenc-vs-combgreenipi}.
\end{proof}

\enter

\begin{defn}[(Restricted) products of combinatorial charges] \label{defn:combinatorial_charges}
	Let \(\mathbf{v} \in \ZvQ\) be a multi-index of vertex residues. Then we define the product of combinatorial charges associated to \(\mathbf{v}\) via
	\begin{equation} \label{eqn:combinatorial_charges}
		\combcharge^\mathbf{v} \coloneq \prod_{k = 1}^{\mathfrak{v}_\Q} \left ( \combcharge^{v_k} \right )^{(\mathbf{v})_k} \, ,
	\end{equation}
	where \((\mathbf{v})_k\) denotes the \(k\)-th entry of \(\mathbf{v}\). In particular, given a vertex residue \(v \in \RQO\) and a natural number \(n \in \mathbb{N}_+\), we define the exponentiation of the combinatorial charge \(\combcharge^v\) by \(n\) via
	\begin{equation}
		\combcharge^{nv} \coloneq \left ( \combcharge^{v} \right )^n \, .
	\end{equation}
	Furthermore, we denote the restriction of \(\combcharge^\mathbf{v}\) to one of the gradings \(\mathbf{g}\) from \defnref{defn:properties-feynman-graphs} via
	\begin{equation} \label{eqn:restricted_products_combinatorial_charges}
		\combcharge^\mathbf{v}_\mathbf{g} \coloneq \eval{\left ( \prod_{k = 1}^{\mathfrak{v}_\Q} \left ( \combcharge^{v_k} \right )^{(\mathbf{v})_k} \right )}_\mathbf{g} \, .
	\end{equation}
\end{defn}

\enter

\begin{defn}[Target algebra and (renormalized) Feynman rules] \label{defn:target-algebra_feynman-rules}
	Let \(\Q\) be a QFT, \(\HQ\) its (associated) renormalization Hopf algebra and an algebra of formal integral expressions (cf.\ \cite[Definition 2.35]{Prinz_3} for the proper definition)
	\begin{equation}
		\Phi \, : \quad \HQ \to \EQ \, , \quad \Gamma \mapsto (D_\Gamma, I_\Gamma) \, ,
	\end{equation}
	where \((D_\Gamma, I_\Gamma)\) is the pair of a Feynman differential form together with a domain related to the graph \(\Gamma\).\footnote{This generality allows to address different representations, e.g.\ momentum, parametric and position space.} 	Moreover, we introduce a renormalization scheme as a linear endomorphism
	\begin{equation} \label{eqn:renormalization_scheme}
		\mathscr{R} \, : \quad \EQ^\varepsilon \surject \EQ^\varepsilon \, , \quad \left ( D,I_\mathscr{E} \left ( \varepsilon \right ) \right ) \mapsto \begin{cases} (D,0_D) & \text{if \(\left ( D,I_\mathscr{E} \left ( \varepsilon \right ) \right ) \in \operatorname{Ker} \left ( \mathscr{R} \right )\)} \\ \left ( D,I_{\mathscr{E}, \mathscr{R}} \left ( \varepsilon \right ) \right ) & \text{else} \end{cases}
	\end{equation}
	satisfying \(\mathscr{R}^2 = \mathscr{R}\), where \(0_D\) is the zero differential form on \(D\). Additionally, to ensure locality of the counterterm, \(\mathscr{R}\) needs to be a Rota--Baxter operator of weight \(\lambda = -1\), i.e.\ fulfill
	\begin{equation}
		\mu \circ \left ( \mathscr{R} \otimes \mathscr{R} \right ) + \mathscr{R} \circ \mu = \mathscr{R} \circ \mu \circ \left ( \mathscr{R} \otimes \id + \id \otimes \mathscr{R} \right ) \, ,
	\end{equation}
	where \(\mu\) denotes the multiplication on \(\EQ^\varepsilon\) (and by abuse of notation also on \(\EQ^\varepsilon_-\) via restriction). In particular, \((\EQ^\varepsilon, \mathscr{R})\) is a Rota--Baxter algebra of weight \(\lambda = -1\) and \(\mathscr{R}\) induces the splitting
	\begin{equation}
		\EQ^\varepsilon \cong \EQ^\varepsilon_+ \oplus \EQ^\varepsilon_-
	\end{equation}
	with \(\EQ^\varepsilon_+ \coloneq \operatorname{Ker} \left ( \mathscr{R} \right )\) and \(\EQ^\varepsilon_- \coloneq \operatorname{Im} \left ( \mathscr{R} \right )\). Then we can introduce the counterterm map \(\countertermsymbol\), sometimes also called `twisted antipode', recursively via the normalization \(\counterterm{\one} \in \mathbf{1}_{\EQ^\varepsilon}\) and
	\begin{equation}
		\countertermsymbol \, : \quad \operatorname{Aug} \left ( \HQ \right ) \to \EQ^\varepsilon_- \, , \quad \Gamma \mapsto - \renscheme{\countertermsymbol \star \left ( \regFR \circ \mathscr{A} \right )} \left ( \Gamma \right )
	\end{equation}
	else, where \(\mathscr{A} \colon \HQ \surject \operatorname{Aug} \left ( \HQ \right )\) is the projector onto the augmentation ideal. Next we define renormalized Feynman rules via
	\begin{equation}
		\Phi_\mathscr{R} \, : \quad \HQ \to \EQ^\varepsilon_+ \, , \quad \Gamma \mapsto \left ( \countertermsymbol \star \Phi \right ) \left ( \Gamma \right ) \, ,
	\end{equation}
	where the corresponding formal Feynman integral expression is well-defined for a suitable choice of the renormalization scheme \(\mathscr{R}\). We remark that the renormalized Feynman rules \(\Phi_\mathscr{R}\) and the counterterm map \(\countertermsymbol\) correspond to the algebraic Birkhoff decomposition of the Feynman rules \(\Phi\) with respect to the renormalization scheme \(\mathscr{R}\), as was first observed in \cite{Connes_Kreimer_0} and is e.g.\ reviewed in \cite{Guo,Panzer}.
\end{defn}

\enter

\begin{defn}[Hopf subalgebras for multiplicative renormalization] \label{defn:hopf_subalgebras_renormalization_hopf_algebra}
	Let \(\Q\) be a QFT, \(\HQ\) its (associated) renormalization Hopf algebra \(\rescombgreen^r_\mathbf{G}\) a restricted combinatorial Green's function and \(\rescombgreenc^r_\mathbf{G}\) a restricted connected combinatorial Green's function, where \(\mathbf{G}\) and \(\mathbf{g}\) denote one of the gradings from \defnref{defn:properties-feynman-graphs}. We are interested in Hopf subalgebras which correspond to multiplicative renormalization, i.e.\ Hopf subalgebras of \(\HQ\) such that the coproduct factors over restricted combinatorial Green's functions as follows:
	\begin{subequations} \label{eqn:hopf_subalgebras_multi-index}
	\begin{align}
		\Delta \left ( \rescombgreen^r_{\mathbf{G}} \right ) & = \sum_{\mathbf{g}} \mathfrak{P}_{\mathbf{g}} \left ( \rescombgreen^r_{\mathbf{G}} \right ) \otimes \rescombgreen^r_{\mathbf{G} - \mathbf{g}}
		\intertext{and}
		\Delta \left ( \rescombgreenc^r_{\mathbf{G}} \right ) & = \sum_{\mathbf{g}} \widehat{\mathfrak{P}}_{\mathbf{g}} \left ( \rescombgreenc^r_{\mathbf{G}} \right ) \otimes \rescombgreenc^r_{\mathbf{G} - \mathbf{g}} \, ,
	\end{align}
	\end{subequations}
	where \(\mathfrak{P}_{\mathbf{g}} \left ( \rescombgreen^r_{\mathbf{G}} \right ) \in \HQ\) and \(\widehat{\mathfrak{P}}_{\mathbf{g}} \left ( \rescombgreenc^r_{\mathbf{G}} \right ) \in \HQ\) are polynomials in graphs such that each summand has multi-index \(\mathbf{g}\).\footnote{There exist closed expressions for the polynomial \(\mathfrak{P}_{\mathbf{g}} \left ( \rescombgreen^r_{\mathbf{G}} \right )\), cf.\ \cite[Lemma 4.6]{Yeats_PhD}, \cite[Proposition 16]{vSuijlekom_QCD}, \cite[Theorem 1]{Borinsky_Feyngen} and \cite[Proposition 4.2]{Prinz_3}, and also for the polynomial \(\widehat{\mathfrak{P}}_{\mathbf{g}} \left ( \rescombgreenc^r_{\mathbf{G}} \right )\), cf.\ \propref{prop:coproduct_cgreensfunctions}.}
\end{defn}

\enter

\begin{rem} \label{rem:hopf_subalgebras_renormalization_hopf_algebra}
	Given the situation of \defnref{defn:hopf_subalgebras_renormalization_hopf_algebra} the Hopf subalgebras correspond in the following way to multiplicative renormalization: Since the counterterm map is multiplicative, we can calculate the \(Z\)-factor for a given residue \(r \in \RQ\) via
	\begin{equation}
		Z^r_{\mathscr{E}, \mathscr{R}} \left ( \varepsilon \right ) \coloneq \counterterm{\combgreen^r} \, .
	\end{equation}
	More details in this direction can be found in \cite{Prinz_3,Panzer,vSuijlekom_Multiplicative} (with potentially different notation). Additionally, we remark that the existence of the Hopf subalgebras from \defnref{defn:hopf_subalgebras_renormalization_hopf_algebra} depends crucially on the grading \(\mathbf{g}\). In particular, for the loop-grading these Hopf subalgebras exist if and only if \(\Q\) has only one fundamental interaction. Furthermore, they exist for the coupling-grading if and only if \(\Q\) has for each fundamental interaction a different coupling constant. Finally, they always exist for the vertex-grading, cf.\ \cite[Proposition 4.2]{Prinz_3} and \propref{prop:coproduct_cgreensfunctions}.
\end{rem}

\enter

\begin{prop}[Coproduct identities for connected (divergent/restricted) combinatorial Green's functions\footnote{This is the version for combinatorial Green's functions based on \emph{connected graphs}, cf.\ the \emph{1PI versions} here: \cite[Lemma 4.6]{Yeats_PhD}, \cite[Proposition 16]{vSuijlekom_QCD}, \cite[Theorem 1]{Borinsky_Feyngen} and \cite[Proposition 4.2]{Prinz_3}.}] \label{prop:coproduct_cgreensfunctions}
	Let \(\Q\) be a QFT, \(\HQ\) its (associated) renormalization Hopf algebra, \(r \in \RQ\) a residue and \(\mathbf{V} \in \ZvQ\) a vertex-grading multi-index. Then the following identities hold:
	{
	\allowdisplaybreaks
	\begin{subequations}
	\begin{align}
		\Delta \left ( \combgreenc^r \right ) & = \sum_{\mathbf{v} \in \ZvQ} \overline{\combgreenc}^r \overline{\combcharge}^\mathbf{v} \otimes \rescombgreenc^r_{\mathbf{v}} \, , \label{eqn:coproduct_greensfunctions} \\
		\Delta \left ( \rescombgreenc^r_{\mathbf{V}} \right ) & = \sum_{\mathbf{v} \in \ZvQ} \eval{\left ( \overline{\combgreenc}^r \overline{\combcharge}^\mathbf{v} \right )}_{\mathbf{V} - \mathbf{v}} \otimes \rescombgreenc^r_{\mathbf{v}} \, , \label{eqn:coproduct_greensfunctions_restricted}
		\intertext{and, provided that \(\Q\) is cograph-divergent and \(\overline{\rescombgreenc}^r_{\mathbf{V}} \neq 0\),}
		\Delta \big ( \overline{\rescombgreenc}^r_{\mathbf{V}} \big ) & = \sum_{\mathbf{v} \in \ZvQ} \eval{\left ( \overline{\combgreenc}^r \overline{\combcharge}^\mathbf{v} \right )}_{\mathbf{V} - \mathbf{v}} \otimes \overline{\rescombgreenc}^r_{\mathbf{v}} \, . \label{eqn:coproduct_greensfunctions_restricted_divergent}
	\end{align}
	\end{subequations}
	}%
\end{prop}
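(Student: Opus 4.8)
The plan is to establish the unrestricted identity \eqref{eqn:coproduct_greensfunctions} by a direct combinatorial computation modelled on the 1PI arguments cited in the footnote, and then to deduce \eqref{eqn:coproduct_greensfunctions_restricted} and \eqref{eqn:coproduct_greensfunctions_restricted_divergent} from it by restricting and projecting, respectively. First I would expand the left-hand side via \defnref{def:combgreenc} and the coproduct of \defnref{defn:renormalization_hopf_algebra},
\begin{equation*}
	\Delta \left ( \combgreenc^r \right ) = \sum_{\substack{\Gamma \in \GQc \\ \res{\Gamma} = r}} \frac{1}{\sym{\Gamma}} \sum_{\gamma \in \DQ{\Gamma}} \gamma \otimes \Gamma / \gamma \, ,
\end{equation*}
and reorganize the double sum by first fixing the cograph \(\Gamma_0 \coloneq \Gamma / \gamma\), which is again a connected graph with residue \(r\), and then summing over all pairs \((\Gamma, \gamma)\) that contract to it. Since only superficially divergent subgraphs occur in \(\DQ{\Gamma}\), the contracted pieces collecting into the left tensor factor are automatically \(\Omega\)-projected, which is the origin of the overlines on \(\overline{\combgreenc}^r\) and \(\overline{\combcharge}^\mathbf{v}\).

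The technical core is the insertion-to-coproduct correspondence: every pair \((\Gamma, \gamma)\) contracting to a fixed cograph \(\Gamma_0\) is equivalent to the data of a connected superficially divergent graph inserted at each internal vertex and internal edge of \(\Gamma_0\), with residues matching the respective vertex- and edge-types, together with the combinatorial placement data counted through the insertion operator \(\ins{\cdot}{\cdot}\) and the embedding count \(\isoemb{\cdot}{\cdot}\). Here I would invoke the symmetry-factor identity under insertion, namely that summing \(1 / \sym{\Gamma}\) over all insertions into \(\Gamma_0\) factorizes into \(1 / \sym{\Gamma_0}\) times the product of the \(1 / \sym{\cdot}\) of the inserted components, once the automorphism multiplicities are divided out. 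This is exactly the combinatorial lemma underlying the 1PI statements \cite[Proposition 4.2]{Prinz_3} and \cite[Proposition 16]{vSuijlekom_QCD}, and it transfers verbatim to \(\GQc\) because it is a purely local statement at the insertion places and does not see whether \(\Gamma_0\) is 1PI or merely connected.

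Carrying out this factorization and organizing the result by the vertex-grading \(\mathbf{v}\) of the cograph then yields \eqref{eqn:coproduct_greensfunctions}: the sum over cographs of fixed grading \(\mathbf{v}\) collects into the right tensor factor \(\rescombgreenc^r_\mathbf{v}\), while the generating function of all admissible insertions assembles into \(\overline{\combgreenc}^r \, \overline{\combcharge}^\mathbf{v}\). The square-root over edges in the combinatorial charge of \defnref{defn:combcharge} arises precisely because each internal edge of \(\Gamma_0\) is shared between its two incident vertices, so that its propagator dressing must be counted once per edge rather than once per incident half-edge, and the extra factor \(\overline{\combgreenc}^r\) reconciles this square-root bookkeeping with the half-edges terminating on the external \(r\)-structure; at this step I would use \lemref{lem:combcharges-combgreenc} to phrase the charges directly through connected Green's functions. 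Identity \eqref{eqn:coproduct_greensfunctions_restricted} then follows by restricting both sides to total vertex-grading \(\mathbf{V}\) and using that \(\Delta\) is graded for the vertex-grading (cf.\ Remark~\ref{rem:hopf_subalgebras_renormalization_hopf_algebra}, where this grading is singled out as the one that always yields a Hopf subalgebra); since \(\rescombgreenc^r_\mathbf{v}\) is homogeneous of grading \(\mathbf{v}\), the left factor is forced to carry grading \(\mathbf{V} - \mathbf{v}\). Finally, \eqref{eqn:coproduct_greensfunctions_restricted_divergent} follows by additionally applying \(\Omega\) to the right factor: under the cograph-divergence hypothesis a divergent \(\Gamma\) yields a divergent cograph \(\Gamma / \gamma\), so that \(\Omega\) commutes past the contraction and \(\rescombgreenc^r_\mathbf{v}\) may be replaced by \(\overline{\rescombgreenc}^r_\mathbf{v}\), with the assumption \(\overline{\rescombgreenc}^r_\mathbf{V} \neq 0\) ensuring the left-hand side is a genuine nonzero generator.

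I expect the main obstacle to be the careful bookkeeping of the \(\Omega\)-projections together with the symmetry factors in the super- or non-renormalizable setting, where not every subgraph with a given residue is divergent. One must verify that restricting \(\DQ{\Gamma}\) to superficially divergent subgraphs is exactly compatible with the insertion factorization, so that the overlines distribute correctly as \(\overline{\combgreenc}^r \, \overline{\combcharge}^\mathbf{v}\) and no convergent insertions spuriously survive nor required divergent ones are lost; keeping this compatible with the edge-sharing square-root structure of the combinatorial charge is the most delicate point of the argument.
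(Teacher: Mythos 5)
Your proposal is correct and follows essentially the same route as the paper: the paper's (much terser) proof likewise reorganizes the coproduct by cographs of fixed residue \(r\) and vertex-grading \(\mathbf{v}\), identifies the left tensor factor as the product of divergent Green's functions dressing the well-defined vertex and edge sets of the cograph, and rewrites it as \(\overline{\combgreenc}^r \, \overline{\combcharge}^\mathbf{v}\) via \lemref{lem:combcharges-combgreenc}, delegating the insertion and symmetry-factor bookkeeping to the cited 1PI versions. Your write-up simply makes that delegated bookkeeping, the restriction to grading \(\mathbf{V}\), and the role of cograph-divergence explicit.
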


\begin{proof}
	Using \cite[Lemma 2.27]{Prinz_3}, we know that Feynman graphs with specified residue \(r\) and vertex grading \(\mathbf{v}\) have well-defined vertex and edge sets \(V \left ( r, \mathbf{v} \right )\) and \(E \left ( r, \mathbf{v} \right )\), respectively. Thus, the corresponding corrections (i.e.\ left-hand sides of the tensor product in \eqnref{eqn:coproduct_greensfunctions}) read
	\begin{equation}
	\begin{split}
		\mathcal{I}^{r, \mathbf{v}} & \coloneq \left ( \prod_{v \in V \left ( r, \mathbf{v} \right )} \overline{\combgreenc}^v \right ) \left ( \prod_{e \in E \left ( r, \mathbf{v} \right )} \overline{\combgreenc}^e \right ) \\
		& \phantom{:} \equiv \overline{\combgreenc}^r \overline{\combcharge}^\mathbf{v} \vphantom{\frac{1}{1}} \, ,
	\end{split}
	\end{equation}
	where we have used \lemref{lem:combcharges-combgreenc}, in the second equality.
\end{proof}

\section{Formalized cancellation identities} \label{sec:formalized-cancellation-identities}

Cancellation identities were introduced by 't Hooft as a graphical way to prove the transversality of the perturbative expansion and thus the validity of the Slavnov--Taylor identities in Quantum Yang--Mills theory \cite{tHooft}, cf.\ \cite{tHooft_Veltman,Cvitanovic,Kissler_Kreimer,Gracey_Kissler_Kreimer,Kissler,Prinz_7}. Specifically, gauge boson propagators possess a tensor structure which decomposes into physical and unphysical modes, the latter corresponding to linearized gauge transformations. Then, the unphysical modes of the propagator --- which are called gaugeons --- are related to the ghost propagator. Next, these identities relate linearized gauge transformations of vertices to the corresponding ghost vertices and the higher-valent vertices. In this section, we first list reasonable axioms for (generalized) Quantum Gauge Theories and then present the most general form of cancellation identities ensuring transversality. In particular, we study which terms cancel in pairs and thus obtain dependencies between gauge boson vertices, ghost vertices and their higher-valent vertices. Finally, we exemplify the general theory with Quantum Yang--Mills theory and (effective) Quantum General Relativity.

\enter

\begin{ter}[(Generalized) Gauge Theory]
	Let \(\Q\) be a Quantum Field Theory, given via a Lagrange density \(\LQ\) based on its space of fields \(\FQ\). We call \(\Q\) a (generalized) Gauge Theory, if it possesses a BRST symmetry on the space of fields, i.e.\ there exists a nilpotent operator \(S\) such that for a field \(\varphi \in \FQ\) we have \(S \varphi \in \operatorname{Sym} \bigl ( J^k \FQ \bigr )\), i.e.\ it is a series in fields and a finite number of derivatives thereof.
\end{ter}

\enter

\begin{defn}[Longitudinal and ghost projection]
	Let \(\Q\) be a (generalized) Gauge Theory with residue set \(\RQ\) and space of fields \(\FQ\). We introduce the following two operations: On gauge boson edges, we denote the ghost projection via \(\boldsymbol{\vartriangleright}\) and on gauge boson vertices, we denote the longitudinal projection via \(\boldsymbol{\blacktriangleright}\). The ghost projection corresponds to a linearized gauge transformation and the longitudinal projection to a linearized gauge transformation.
\end{defn}

\enter

\begin{thm}[Generalized cancellation identities\footnote{The Feynman graphs are drawn with \texttt{FeynGame} \cite{Harlander_Klein_Lipp,Harlander_Klein_Schaaf,Bundgen_Harlander_Klein_Schaaf}.}] \label{thm:gen-ci}
	Let \(\rescombgreenc^{i, j, k}_\mathbf{c}\) be the connected Green's function with \(i\) external gauge edges, \(j\) external ghost edges and \(k\) external matter edges in coupling-grading degree \(\mathbf{c}\), where we consider the external edges numbered as follows: Edges \(1, \dots, i\) are gauge bosons, edges \((i+1), \dots, (i+j)\) are ghosts and edges \((i+j+1), \dots, (i+j+k)\) are matter particles.\footnote{If the theory possesses several gauge particles, like (effective) Quantum General Relativity coupled to the Standard Model, then only the gauge bosons for the considered longitudinal contraction \(\boldsymbol{\blacktriangleright}\) and the gaugeon contraction \(\boldsymbol{\vartriangleright}\) count here as gauge edges while the other ones behave as matter edges. Furthermore, we emphasize that we draw matter edges without orientation: Thus, if the corresponding matter field possesses an orientation the corresponding graph is understood as the sum over all such compatible edge orientations.} Let furthermore
	\begin{subequations}
	\begin{align}
		\RQ & \coloneq \RQO \sqcup \RQI
		\intertext{be the corresponding residue set, with}
		\RQO & \coloneq \set{\graph{2cm}{gauge-propagator} \, , \; \graph{2cm}{ghost-propagator} \, , \; \graph{2cm}{matter-propagator}}
		\intertext{the propagator residues, and}
		\RQI & \coloneq \set{\alpha_{i+1} \graph{2.5cm}{gauge-vertex} {\mkern-21mu \scriptstyle i \, \Biggr )} \, , \; \beta_{i+1} \graph{2.5cm}{gauge-ghost-vertex} {\mkern-27mu \raisebox{5mm}{\(\scriptstyle i \, \Biggr )\)}} \, , \; \gamma_{i+1,k}\graph{3cm}{gauge-matter-vertex}^{\mkern-50mu \raisebox{-6mm}{\(\scriptstyle i \Bigr )\)}}_{\mkern-50mu \raisebox{6mm}{\(\scriptstyle k \Bigr )\)}}}_{i, k \in \mathbb{N}_0}
	\end{align}
	\end{subequations}
	the vertex residues, where \(\alpha_i, \beta_i, \gamma_{i,k} \in \set{0,1}\) are theory-dependent constants determining which vertex-valences are present. Then, the following contraction identities for any given \(i, j, k \in \mathbb{N}_0\) and \(\mathbf{c} \in \ZqQ\)
	\begin{equation}
		\boldsymbol{\blacktriangleright}_1 \cdot \rescombgreenc^{i, j, k}_\mathbf{c} \simeq_\textup{CI} \sum_{l=2}^i \rescombgreenc^{i-2, j+2, k}_\mathbf{c} \cdot \boldsymbol{\vartriangleright}_l + \sum_{l=2}^{i+j+k} \rescombgreenc^{i-1, j+1, k}_\mathbf{c} \cdot \boldsymbol{\vert}_l \, ,
	\end{equation}
	i.e.\ either the longitudinal mode travels through the Green's function and exits via any external gauge edge or it causes an on-shell cancellation on any external edge, are given by the following propagator and vertex identities:
	{\allowdisplaybreaks
	\begin{subequations} \label{eqns:cancellation-identitites}
	\begin{align}
		\gfproj \mkern-6.63mu \graph{2cm}{gauge-propagator} & \simeq_\textup{CI} \graph{2cm}{ghost-propagator} \mkern-7.19mu \gtproj \, ,
		\intertext{i.e.\ the ghost projection of a gauge boson propagator is its ghost propagator with a longitudinal projection at the end,}
		\begin{split}
			\gtproj \mkern-8.25mu \graph{2.5cm}{gauge-vertex} {\mkern-21mu \scriptstyle i \, \Biggr )} & \simeq_\textup{CI} \sum_{l = 2}^i \left ( \beta_{i-1} \graph{2.5cm}{gauge-vertex_long}^{\raisebox{-2.25mm}{\(\mkern-12.5mu \gfproj \mkern6mu \scriptstyle l\)}} \raisebox{-1.5mm}{\(\mkern-45mu \scriptstyle i - 1 \biggr )\)} \; + \alpha_{i+1} \graph{2.5cm}{gauge-vertex_os-outer}^{\raisebox{-2.25mm}{\(\mkern-14.5mu \oscan \mkern6mu \scriptstyle l \mkern2mu\)}}\raisebox{-1.5mm}{\(\mkern-31mu \scriptstyle i - 1 \biggr )\)} \right ) \\
			& \phantom{=} + \mkern-5mu \sum_{m+n = i} \left ( \alpha_{m+2} \, \alpha_{n+1} \graph{4cm}{gauge-vertex_expanded}^{\mkern-100mu \raisebox{-9mm}{\(\scriptstyle m \biggr )\)}}_{\mkern-45mu \raisebox{10mm}{\(\scriptstyle n \biggr )\)}} \right ) \, ,
		\end{split}
		\intertext{i.e.\ the longitudinal projection of a gauge boson vertex is its ghost vertex with a ghost projection as well as the gauge boson vertex with an external on-shell cancellation and all internal on-shell expansions of the vertex,}
		\begin{split}
			\gtproj \mkern-8.25mu \graph{2.5cm}{gauge-ghost-vertex} {\mkern-27mu \raisebox{5mm}{\(\scriptstyle i \, \Biggr )\)}} & \simeq_\textup{CI} \sum_{l = 2}^i \left ( \beta_{i+2} \graph{2.5cm}{gauge-ghost-vertex_os-gauge-outer}^{\raisebox{-3mm}{\(\mkern-14.25mu \oscan \mkern6mu \scriptstyle l \)}}\mkern-31.5mu \raisebox{1.5mm}{\(\scriptstyle i - 1 \biggr )\)} \right )  \! + \beta_{i+2} \! \left (\mkern-6mu \graph{2.5cm}{gauge-ghost-vertex_os-ghost-outer} \mkern-12mu + \mkern-6mu \graph{2.5cm}{gauge-ghost-vertex_os-antighost-outer} \mkern-6mu \right ) \\
			& \phantom{=} \mkern-50mu + \mkern-5mu \sum_{m+n = i} \left ( \alpha_{n+1} \, \beta_{m+2} \mkern-15mu \graph{4cm}{gauge-ghost-vertex_expanded-gauge-inner-1}^{\mkern-95mu \raisebox{-17.5mm}{\(\scriptstyle m \Bigr )\)}}_{\mkern-45mu \raisebox{10mm}{\(\scriptstyle n \biggr )\)}} \mkern-30mu + \alpha_{m+2} \, \beta_{n+1} \mkern-15mu \graph{4cm}{gauge-ghost-vertex_expanded-gauge-inner-2}^{\mkern-100mu \raisebox{-10.5mm}{\(\scriptstyle m \Bigr )\)}}_{\mkern-45mu \raisebox{17.5mm}{\(\scriptstyle n \Bigr )\)}} \right . \\
			& \phantom{= \sum (} \left . + \beta_{m+1} \, \beta_n \mkern-15mu \graph{4cm}{gauge-ghost-vertex_expanded-ghost-inner-1}^{\mkern-100mu \raisebox{-13mm}{\(\scriptstyle m \Bigr )\)}}_{\mkern-40mu \raisebox{12.5mm}{\(\scriptstyle n \Bigr )\)}} \mkern-10mu + \beta_{m+1} \, \beta_n \mkern-15mu \graph{4cm}{gauge-ghost-vertex_expanded-ghost-inner-2}^{\mkern-100mu \raisebox{-13mm}{\(\scriptstyle m \Bigr )\)}}_{\mkern-40mu \raisebox{12.5mm}{\(\scriptstyle n \Bigr )\)}} \right )
		\end{split}
		\intertext{i.e.\ the longitudinal projection of a gauge-ghost vertex is the gauge-ghost vertex with an external on-shell cancellation and all internal on-shell expansions of the vertex,}
		\begin{split}
			\gtproj \mkern-9.84mu \graph{3cm}{gauge-matter-vertex}^{\mkern-50mu \raisebox{-6mm}{\(\scriptstyle i \Bigr )\)}}_{\mkern-50mu \raisebox{6mm}{\(\scriptstyle k \Bigr )\)}} & \simeq_\textup{CI} \sum_{l = 2}^i \left ( \gamma_{i+1,k} \! \oset{\mkern33mu l \mkern-33mu}{\graph{3cm}{gauge-matter-vertex_os-gauge-outer}}^{\mkern-40mu \raisebox{-12mm}{\(\scriptstyle i-1 \Bigr )\)}}_{\mkern-50mu \raisebox{6mm}{\(\scriptstyle k \Bigr )\)}} \right ) + \sum_{l = i+1}^{i+k} \left ( \gamma_{i+1,k} \! \uset{\mkern33mu l \mkern-33mu}{\graph{3cm}{gauge-matter-vertex_os-matter-outer}}^{\mkern-50mu \raisebox{-6mm}{\(\scriptstyle i \Bigr )\)}}_{\mkern-40mu \raisebox{12mm}{\(\scriptstyle k-1 \Bigr )\)}} \right ) \\
			& \phantom{=} \mkern-100mu + \mkern-5mu \sum_{\subalign{m+n & = i \\ p+q & = k}} \left ( \gamma_{m+2,p} \, \gamma_{n+1,q} \mkern-15mu \graph{4cm}{gauge-matter-vertex_os-gauge-inner}^{\mkern-110mu \raisebox{-5mm}{\(\scriptstyle m \bigr )\)} \mkern-10mu \raisebox{-16mm}{\(\scriptstyle p \bigr )\)}}_{\mkern-60mu \raisebox{6mm}{\(\scriptstyle q \bigr )\)} \mkern10mu \raisebox{15mm}{\(\scriptstyle n \bigr )\)}} \mkern-18mu + \gamma_{m+1,p+1} \, \gamma_{n,q+1} \mkern-15mu \graph{4cm}{gauge-matter-vertex_os-matter-inner}^{\mkern-110mu \raisebox{-5mm}{\(\scriptstyle m \bigr )\)} \mkern-10mu \raisebox{-16mm}{\(\scriptstyle p \bigr )\)}}_{\mkern-60mu \raisebox{6mm}{\(\scriptstyle q \bigr )\)} \mkern10mu \raisebox{15mm}{\(\scriptstyle n \bigr )\)}} \right ) \, ,
		\end{split}
	\end{align}
	\end{subequations}
	}%
	i.e.\ the longitudinal projection of a gauge-matter vertex is the gauge-matter vertex with an external on-shell cancellation and all internal on-shell expansions of the vertex.
\end{thm}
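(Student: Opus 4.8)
The plan is to first establish the local propagator and vertex identities collected in \eqref{eqns:cancellation-identitites} from the Feynman rules of a theory obeying the BRST axioms, and then to assemble them into the global contraction identity by tracking a single longitudinal mode as it is propagated through an arbitrary connected graph. The local identities carry the analytic content; once they are in place, the global statement reduces to a bookkeeping argument about where the injected mode can exit.

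First I would derive the propagator identity. Decomposing the gauge-boson propagator in the chosen covariant gauge into a transversal part and a part longitudinal in the external momentum, the ghost projection $\boldsymbol{\vartriangleright}$ annihilates the transversal part and strips one momentum factor off the longitudinal part; what remains is, up to the same scalar factor that defines the ghost propagator, the ghost propagator terminating in a longitudinal projection $\boldsymbol{\blacktriangleright}$ at its opposite end. I would then obtain the three vertex identities by contracting the explicit gauge-boson, gauge--ghost and gauge--matter vertices with the longitudinal momentum of leg $1$: the nilpotency of $S$ and the resulting Slavnov--Taylor structure at the vertex level guarantee that each contraction decomposes into (i) the corresponding ghost vertex carrying a ghost projection on a continuing internal line, (ii) external on-shell cancellations $\boldsymbol{\vert}$ on the attached legs, and (iii) internal on-shell expansions in which an inverse propagator splits the vertex into two lower-valent vertices joined by a cut line. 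The constants $\alpha_i,\beta_i,\gamma_{i,k}\in\{0,1\}$ enter precisely to switch off the contributions of any vertex valence absent from the theory, so the identities remain correct on theories with a restricted set of interactions.

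With the local identities in hand, I would prove the global identity by induction on the number of internal vertices traversed by the injected mode. Applying the propagator identity at leg $1$ converts that leg into a ghost and deposits a longitudinal projection on the adjacent internal vertex; the vertex identity there produces the three families above. A ghost-continuation term of type (i) re-injects the mode, now as a ghost projection, onto a fresh internal gauge edge, whereupon the propagator identity applies again and the induction advances by one vertex. The external terms of type (ii) are terminal and are collected directly: a mode exiting through an external gauge leg $l$ as a ghost projection yields a summand of $\sum_{l=2}^i\rescombgreenc^{i-2,j+2,k}_\mathbf{c}\cdot\boldsymbol{\vartriangleright}_l$, since both leg $1$ and leg $l$ have become ghosts, whereas an on-shell cancellation on any leg $l$ yields a summand of $\sum_{l=2}^{i+j+k}\rescombgreenc^{i-1,j+1,k}_\mathbf{c}\cdot\boldsymbol{\vert}_l$, since only leg $1$ has become a ghost while leg $l$ retains its type.

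The main obstacle, and the heart of 't Hooft's original argument, is to show that the internal on-shell expansion terms of type (iii) contribute nothing modulo $\simeq_\textup{CI}$. The mechanism is pairwise cancellation: each cut internal line produced by splitting a vertex is generated a second time, with the opposite sign dictated by the orientation of the inverse-propagator subtraction, either at the vertex on the other end of the cut line within the same graph or in a companion graph of the sum defining $\rescombgreenc^{i,j,k}_\mathbf{c}$. I expect the delicate point to be verifying that this pairing is exact and exhaustive uniformly across all vertex types and valences; here the $1/\sym{\Gamma}$ normalization of $\rescombgreenc$ is essential, as it makes the symmetry factors of the paired configurations agree so that the cancellation is precise rather than merely up to combinatorial prefactors. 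Once this is settled, only the terminal boundary contributions survive, and collecting them reproduces exactly the two sums on the right-hand side; the residual on-shell cancellation terms then vanish for physical (transversal, on-shell) external data, which is the route to transversality pursued in the remainder of the section.
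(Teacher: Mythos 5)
Your proposal is correct and follows essentially the same route as the paper: the paper's own proof is a terse ``combinatorial diagram chase'' that records exactly your trichotomy at each vertex (ghost continuation if the ghost vertex exists, external on-shell cancellation via higher-valent vertices, internal expansion via lower-valent vertices) and the corresponding dichotomy at ghost and matter vertices. Your write-up merely makes explicit what the paper leaves implicit in that chase --- the induction along the path of the injected mode and the pairwise cancellation of the internal on-shell expansion terms across the graph sum, where the symmetry-factor normalization of the connected Green's function is indeed what makes the pairing exact.
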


\enter

\begin{proof}
	This statement follows from a combinatorial diagram chase: The contraction of a gauge boson vertices produces either a longitudinal mode traveling through the vertex if there exists a corresponding ghost vertex, it can produce an external on-shell cancellation relating it to higher-valent gauge vertices if existent or it can expand the vertex relating it to lower-valent vertices if existent. The contraction of ghost or matter vertices can also either result in on-shell cancellations of external edges if there exists a higher-valent vertex or in expansions into respective two vertices if there exist the corresponding lower-valent vertices.
\end{proof}

\enter

\begin{exmp}[Quantum Yang--Mills theory] \label{exmp:qym}
	We consider pure Quantum Yang--Mills theory with a Lorenz gauge fixing and a Faddeev--Popov ghost Lagrange density:
	\begin{equation}
	\begin{split}
		\mathcal{L}_\text{QYM} & \coloneq \mathcal{L}_\text{YM} + \mathcal{L}_\text{GF} + \mathcal{L}_\text{Ghost} \\ & \phantom{:} \equiv - \eta^{\mu \nu} \eta^{\rho \sigma} \delta_{a b} \left ( \frac{1}{4 \mathrm{g}^2} F^a_{\mu \rho} F^b_{\nu \sigma} + \frac{1}{2 \xi} \big ( \partial_\mu A^a_\nu \big ) \big ( \partial_\rho A^b_\sigma \big ) \right ) \dif V_\eta \\
		& \phantom{\coloneq} + \eta^{\mu \nu} \left ( \frac{1}{\xi} \overline{c}_a \left ( \partial_\mu \partial_\nu c^a \right ) + \mathrm{g} \tensor{f}{^a _b _c} \overline{c}_a \left ( \partial_\mu \big ( c^b A^c_\nu \big ) \right ) \right ) \dif V_\eta \, ,
	\end{split}
	\end{equation}
	where \(F^a_{\mu \nu} := \mathrm{g} \big ( \partial_\mu A^a_\nu - \partial_\nu A^a_\mu \big ) - \mathrm{g}^2 \tensor{f}{^a _b _c} A^b_\mu A^c_\nu\) is the local curvature form of the gauge boson \(A^a_\mu\). Furthermore, \(\dif V_\eta := \dif t \wedge \dif x \wedge \dif y \wedge \dif z\) denotes the Minkowskian volume form. Additionally, \(\eta^{\mu \nu} \partial_\mu A^a_\nu \equiv 0\) is the Lorenz gauge fixing functional and \(\xi\) the gauge fixing parameter. Finally, \(c^a\) and \(\overline{c}_a\) are the gauge ghost and gauge antighost, respectively. Specifically, for the Lorenz gauge fixing, we have the following transversal structure:
	\begin{subequations} \label{eqn:projection_tensors_qym}
	\begin{align}
		L^\nu_\mu & := \frac{1}{p^2} p^\nu p_\mu \, , \\
		I^\nu_\mu & := \delta^\nu_\mu
		\intertext{and}
		T^\nu_\mu & := I^\nu_\mu - L^\nu_\mu
	\end{align}
	\end{subequations}
	as longitudinal, identical and transversal projection tensors, where a short calculation verifies \(L^2 = L\), \(I^2 = I\) and \(T^2 = T\). Specifically, the longitudinal and gauge projection (co)vectors are then given as eigenvectors to the transversal structure, cf.\ \cite[Section 3.1]{Prinz_7}:
	\begin{subequations}
	\begin{align}
		\gtproj_\mu &\equiv g_\mu \coloneq \frac{1}{p^2} p_\mu
		\intertext{and}
		\gfproj^\nu & \equiv l^\nu \coloneq p^\nu
	\end{align}
	\end{subequations}
	This corresponds to the situation in \thmref{thm:gen-ci} with the following constants being non-zero: \(\alpha_3 = \alpha_4 = 1\) because there are only three- and four-valent gauge boson vertices and \(\beta_1 = 1\) because there is only a three-valent ghost-vertex.
\end{exmp}

\enter

\begin{exmp}[(Effective) Quantum General Relativity] \label{exmp:qgr}
	We consider pure (effective) Quantum General Relativity with a de Donder gauge fixing and a Faddeev--Popov ghost Lagrange density:
	\begin{equation}
	\begin{split}
		\mathcal{L}_\text{QGR} & \coloneq \mathcal{L}_\text{GR} + \mathcal{L}_\text{GF} + \mathcal{L}_\text{Ghost} \\ & \phantom{:} \equiv - \frac{1}{2 \varkappa^2} \left ( \sqrt{- \Det{g}} R + \frac{1}{2 \zeta} \eta^{\mu \nu} \deDonder^{(1)}_\mu \deDonder^{(1)}_\nu \right ) \dif V_\eta \\ & \phantom{\coloneq} - \frac{1}{2} \eta^{\rho \sigma} \left ( \frac{1}{\zeta} \overline{C}^\mu \left ( \partial_\rho \partial_\sigma C_\mu \right ) + \overline{C}^\mu \left ( \partial_\mu \big ( \tensor{\Gamma}{^\nu _\rho _\sigma} C_\nu \big ) - 2 \partial_\rho \big ( \tensor{\Gamma}{^\nu _\mu _\sigma} C_\nu \big ) \right ) \right ) \dif V_\eta \, ,
	\end{split}
	\end{equation}
	where \(R := g^{\nu \sigma} \tensor{R}{^\mu _\nu _\mu _\sigma}\) is the Ricci scalar (with \(\tensor{R}{^\rho _\sigma _\mu _\nu} := \partial_\mu \tensor{\Gamma}{^\rho _\nu _\sigma} - \partial_\nu \tensor{\Gamma}{^\rho _\mu _\sigma} + \tensor{\Gamma}{^\rho _\mu _\lambda} \tensor{\Gamma}{^\lambda _\nu _\sigma} - \tensor{\Gamma}{^\rho _\nu _\lambda} \tensor{\Gamma}{^\lambda _\mu _\sigma}\) the Riemann tensor). Again, \(\dif V_\eta := \dif t \wedge \dif x \wedge \dif y \wedge \dif z\) denotes the Minkowskian volume form, which is related to the Riemannian volume form \(\dif V_g\) via \(\dif V_g \equiv \sqrt{- \Det{g}} \dif V_\eta\). Additionally, \(\deDonder^{(1)}_\mu := \eta^{\rho \sigma} \Gamma_{\mu \rho \sigma} \equiv 0\) is the linearized de Donder gauge fixing functional and \(\zeta\) the gauge fixing parameter. Finally, \(C_\mu\) and \(\overline{C}^\mu\) are the graviton-ghost and graviton-antighost, respectively. Again, we refer to \cite{Prinz_4,Prinz_2,Prinz_8} for more detailed introductions and further comments on the chosen conventions. Specifically, for the de Donder gauge fixing, we have the following transversal structure:
	\begin{subequations} \label{eqn:projection_tensors_qgr}
	\begin{align}
		\bbL^{\rho \sigma}_{\mu \nu} & := \frac{1}{2 p^2} \left ( \delta^\rho_\mu p^\sigma p_\nu + \delta^\sigma_\mu p^\rho p_\nu + \delta^\rho_\nu p^\sigma p_\mu + \delta^\sigma_\nu p^\rho p_\mu - 2 \eta^{\rho \sigma} p_\mu p_\nu \right ) \, , \\
		\bbI^{\rho \sigma}_{\mu \nu} & := \frac{1}{2} \left ( \delta^\rho_\mu \delta^\sigma_\nu + \delta^\sigma_\mu \delta^\rho_\nu \right )
		\intertext{and}
		\bbT^{\rho \sigma}_{\mu \nu} & := \bbI^{\rho \sigma}_{\mu \nu} - \bbL^{\rho \sigma}_{\mu \nu} \, ,
	\end{align}
	\end{subequations}
	as longitudinal, identical and transversal projection tensors, where a short calculation verifies \(\bbL^2 = \bbL\), \(\bbI^2 = \bbI\) and \(\bbT^2 = \bbT\). Specifically, the longitudinal and gauge projection tensors are then given as eigentensors to the transversal structure, cf.\ \cite[Section 3.2]{Prinz_7}:
	\begin{subequations}
	\begin{align}
		\gtproj_{\mu \nu}^\kappa &\equiv \mathscr{G}_{\mu \nu}^\kappa \coloneq \frac{1}{p^2} \big ( p_\mu \delta_\nu^\kappa + p_\nu \delta_\mu^\kappa \big )
		\intertext{and}
		\gfproj^{\rho \sigma}_\lambda & \equiv \mathscr{L}^{\rho \sigma}_\lambda \coloneq \frac{1}{2} \big ( p^\rho \delta^\sigma_\lambda + p^\sigma \delta^\rho_\lambda - p_\lambda \eta^{\rho \sigma} \big )
	\end{align}
	\end{subequations}
	This corresponds to the situation in \thmref{thm:gen-ci} with \(\alpha_i = 1\) for all \(i \in \mathbb{N}_{\geq 3}\) and \(\beta_i = 1\) for all \(i \in \mathbb{N}_{\geq 1}\). Interestingly, for the metric density decomposition \(\sqrt{- \dt{g}} g^{\mu \nu} \equiv \eta^{\mu \nu} + \varkappa \boldsymbol{\phi}^{\mu \nu}\) of Goldberg \cite{Goldberg} and Capper et al.\ \cite{Capper_Leibbrandt_Ramon-Medrano,Capper_Medrano,Capper_Namazie} there exists only a three-valent ghost-vertex and thus \(\beta_1 = 1\) with \(\beta_i = 0\) for all \(i \in \mathbb{N}_{\geq 2}\), cf.\ \cite{Kissler,Prinz_7}.
\end{exmp}

\enter

\begin{defn}[Cancellation identities relation] \label{defn:cancellation-identities-relation}
	Let \(\GQc\) be the set of connected Feynman graphs and \(\Gamma_1, \Gamma_2 \in \GQc\) be two such Feynman graphs with respective subtrees \(\tau_1 \subseteq \Gamma_1\) and \(\tau_2 \subseteq \Gamma_2\). We introduce the following equivalence relation:
	\begin{equation}
		\Gamma_1 \sim_\text{CI} \Gamma_2 \quad \iff \quad \tau_1 \simeq_\text{CI} \tau_2 \, ,
	\end{equation}
	where \(\simeq_\text{CI}\) denotes equality of two trees with respect to the identities from \eqnsref{eqns:cancellation-identitites}.
\end{defn}

\section{The pBRST Feynman graph complex} \label{sec:pBRST-Feynman-graph-complex}

Now, we consider the set of connected Feynman diagrams \(\GQc\) and construct a Feynman graph complex, which serves as a perturbative BRST (pBRST) complex. More specifically, like the BRST differential, which replaces fields via their gauge transformation in direction of a ghost field, we project external gauge boson edges onto their longitudinal degrees of freedom. Specifically, the cohomology groups are then by construction transversal (i.e.\ gauge invariant) linear combinations of Feynman graphs. Using the formalized cancellation identities from \sectionref{sec:formalized-cancellation-identities}, we find that the even cohomology groups are given via connected Green's functions with corresponding number of external ghosts, while the odd cohomology groups vanish.

\enter

\begin{defn}[Connected Feynman graphs with ordered and labeled external edges] \label{defn:connected-colored-ordered-feynman-graph}
	Let \(\GQc\) be the set of connected Feynman graphs from \defnref{defn:combinatorial-data-qft} and \(\Gamma \in \GQc\) a Feynman graph therein. We denote by \(\EG{\Gamma}\) the set of its external gauge boson edges, which we order by means of the map \(\sigma \colon \EG{\Gamma} \to \mathbb{N}\). Moreover, we label the set \(\EG{\Gamma}\) by elements of the transversal structure \(\TQ \coloneq \set{\boldsymbol{L}, \boldsymbol{I}, \boldsymbol{T}}_{\boldsymbol{G}}\) via the map \(\tau \colon \EG{\Gamma} \to \TQ\); the default is that external edges are labeled by \(\boldsymbol{T}\). From this, we construct the set of connected Feynman graphs with ordered and labeled external edges as triples \(\GGamma \equiv (\Gamma, \sigma, \tau)\) in the set \(\GGamma \in \GQo\).
\end{defn}

\enter

\begin{defn}[Gaugeon-grading] \label{defn:gaugeon-grading}
	Let \(\GGamma \in \GQo\) as in \defnref{defn:connected-colored-ordered-feynman-graph}. We introduce the map
	\begin{equation}
		\mathfrak{L} \, : \quad \GQo \to \mathbb{N}_0 \, , \quad \GGamma \mapsto \left ( \# \text{ \(\boldsymbol{L}\)-colored external edges} \right ) + \left ( \# \text{ external ghost edges} \right ) \, ,
	\end{equation}
	which we call \emph{gaugeon-grading}.
\end{defn}

\enter

\begin{defn}[Gaugeon graph complex] \label{defn:gaugeon-graph-complex}
	Given the situation of \defnsaref{defn:connected-colored-ordered-feynman-graph}{defn:gaugeon-grading}. Then we introduce the longitudinal projection operator for the external gauge boson edge \(e \in \EG{\Gamma}\) via (no Einstein summation over \(e\))
	\begin{subequations}
	\begin{align}
		\lambda_e \, & : \quad \GQo \to \GQo \, , \quad \GGamma \mapsto \left ( \boldsymbol{L}_e \frac{\partial}{\partial \boldsymbol{T}_e} \right ) \GGamma \, .
		\intertext{Additionally, we introduce the \emph{perturbative BRST differential}}
		\llambda \GGamma \, & : \quad \GQo \to \GQo \, , \quad \GGamma \mapsto \sum_{e \in \EG{\GGamma}} \left ( -1 \right )^{\mathfrak{L} \left ( \GGamma ; e \right )} \lambda_e \GGamma \, , \label{eqn:gaugeon-projection-operator}
	\end{align}
	\end{subequations}
	where \(\mathfrak{L} \left ( \GGamma ; e \right ) \colon \EG{\GGamma} \to \mathbb{N}_0\) denotes the gaugeon-grading of \(\GGamma\) until external edge \(e \in \EG{\Gamma}\) according to the chosen ordering \(\sigma \left ( \GGamma \right )\). Finally, we denote by \(\CQ \coloneq (\GQo, \llambda)\) the corresponding Feynman graph complex.
\end{defn}

\enter

\begin{lem} \label{lem:lambda-cohomological-differential}
	The operator \(\llambda\) is a cohomological differential for the gaugeon-grading with trivial cohomology.
\end{lem}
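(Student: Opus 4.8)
The plan is to reduce the statement to three routine checks --- that $\llambda$ has degree $+1$ for the gaugeon-grading, that $\llambda^2 = 0$, and that the resulting complex is acyclic --- and then to dispatch the last point with an explicit contracting homotopy. The whole argument rests on the elementary observation that $\lambda_e$ acts purely on the label of the single external edge $e$, treated as a formal variable exactly as in \defnref{defn:gaugeon-grading}: it sends $\boldsymbol{T}_e \mapsto \boldsymbol{L}_e$ and annihilates any graph whose edge $e$ is labeled $\boldsymbol{L}$ or $\boldsymbol{I}$. Consequently the $\lambda_e$ for distinct edges commute on labels, the label $\boldsymbol{I}$ is inert, and $\CQ$ splits as a direct sum indexed by the underlying topology $\Gamma$, its ordering $\sigma$, and the inert data (external ghost and matter legs together with the $\boldsymbol{I}$-labeled gauge edges); on each summand the only moving part is the $\set{\boldsymbol{L}, \boldsymbol{T}}$-labeling of the non-$\boldsymbol{I}$ external gauge edges. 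Since $\lambda_e$ replaces one $\boldsymbol{T}$ by an $\boldsymbol{L}$ and leaves ghost edges untouched, it raises $\mathfrak{L}$ by exactly one, so $\llambda$ maps the degree-$p$ part of $\CQ$ into the degree-$(p+1)$ part.

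For $\llambda^2 = 0$ I would expand $\llambda^2 \GGamma = \sum_{e, f} (-1)^{\mathfrak{L}(\GGamma; e) + \mathfrak{L}(\lambda_e \GGamma; f)} \lambda_f \lambda_e \GGamma$. The diagonal terms vanish because $\lambda_e^2 = 0$ (after one flip the edge $e$ carries the label $\boldsymbol{L}$, which $\lambda_e$ annihilates). For the off-diagonal terms with $e, f$ both $\boldsymbol{T}$-labeled and $\sigma(e) < \sigma(f)$, the labelings $\lambda_f \lambda_e \GGamma$ and $\lambda_e \lambda_f \GGamma$ coincide, so the two ordered contributions differ only by their Koszul signs: flipping $e$ to $\boldsymbol{L}$ raises the partial gaugeon-grading counted up to $f$ by one, whereas flipping $f$ leaves the partial grading up to the earlier edge $e$ unchanged. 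Hence the two signs are opposite and the contributions cancel in pairs.

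For the acyclicity I would introduce the degree-$(-1)$ operators $h_e \coloneq \boldsymbol{T}_e \frac{\partial}{\partial \boldsymbol{L}_e}$, sending $\boldsymbol{L}_e \mapsto \boldsymbol{T}_e$, and assemble them with the same Koszul signs into $h \coloneq \sum_{e \in \EG{\GGamma}} (-1)^{\mathfrak{L}(\GGamma; e)} h_e$. A local computation gives $\lambda_e h_e + h_e \lambda_e = \id - P_{\boldsymbol{I}_e}$, where $P_{\boldsymbol{I}_e}$ projects onto the edges labeled $\boldsymbol{I}$, while the cross-terms $\lambda_e h_f + h_f \lambda_e$ with $e \neq f$ cancel by the same sign mechanism as above. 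Summing yields $\llambda h + h \llambda = n \cdot \id$ on the summand whose non-$\boldsymbol{I}$ external gauge edges number $n$, so that $\tfrac{1}{n} h$ is a contracting homotopy whenever $n \geq 1$; equivalently, each such summand is the tensor product of the $n$ acyclic two-term complexes $\big(\boldsymbol{T}_e \xrightarrow{\cong} \boldsymbol{L}_e\big)$ and is therefore exact, which establishes trivial cohomology.

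I expect the sign bookkeeping to be the main obstacle: one must pin down the precise convention for ``the gaugeon-grading until $e$'' entering $\mathfrak{L}(\GGamma; e)$ and check that it produces Koszul signs that are simultaneously compatible with $\llambda^2 = 0$ and with the homotopy relation. The only other delicate point is the normalization of $h$ together with the bookkeeping of the inert $\boldsymbol{I}$-labels, which must be isolated so that the operator $\sum_e (\id - P_{\boldsymbol{I}_e})$ is correctly identified with multiplication by the number of participating external gauge edges.
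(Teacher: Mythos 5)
Your proposal is correct, and its acyclicity half takes a genuinely different route from the paper. For the differential property the two arguments essentially coincide: the paper likewise notes that \(\llambda\) raises the gaugeon-grading by one, \(\mathfrak{L} \left ( \llambda \GGamma \right ) = \mathfrak{L} \left ( \GGamma \right ) + 1\), and that the partial-grading sign makes the mixed terms cancel in pairs --- your explicit pairing of \(\lambda_f \lambda_e\) with \(\lambda_e \lambda_f\) just spells this out. For trivial cohomology, however, the paper argues directly on cocycles: it takes a homogeneous \(\ggamma\) with \(\llambda \ggamma = 0\), observes that the kernel condition forces every external gauge boson edge to appear with an \(\boldsymbol{L}\)-label in some summand with matching signs, and asserts that these are precisely the conditions for membership in the image of \(\llambda\). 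You instead exhibit an explicit contracting homotopy \(h = \sum_{e} (-1)^{\mathfrak{L} \left ( \GGamma ; e \right )} \boldsymbol{T}_e \frac{\partial}{\partial \boldsymbol{L}_e}\) with \(\llambda h + h \llambda = n \cdot \id\), equivalently you identify each summand of \(\CQ\) (fixed topology, ordering and inert labels) with a tensor product of \(n\) acyclic two-term complexes \(\boldsymbol{T}_e \to \boldsymbol{L}_e\); your local verification of the anticommutator \(\lambda_e h_e + h_e \lambda_e = \id - P_{\boldsymbol{I}_e}\) and of the sign cancellations in the cross-terms is sound. This buys rigor and checkability where the paper's ``\(m\)-divisible'' counting is informal, and the tensor-product formulation even avoids the \(1/n\) normalization (so it works integrally, which is relevant for the \(\mathbb{Z}\)-coefficients appearing in Theorem~\ref{thm:pBRST-cohomology}), whereas the paper's argument is shorter and stays entirely inside the kernel/image description. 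One caveat, shared by both proofs but at least made visible by your hypothesis \(n \geq 1\): a summand with no non-\(\boldsymbol{I}\)-labeled external gauge boson edges (e.g.\ a graph whose external legs are all ghosts or matter particles) is concentrated in a single degree, lies in the kernel of \(\llambda\), and is not a boundary, so such graphs must be excluded or treated separately for the cohomology to be literally trivial; the paper's proof passes over this case silently, and your write-up should state explicitly how it is handled rather than concluding trivial cohomology for all of \(\CQ\).
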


\begin{proof}
	For the first part, we observe that \(\llambda\) squares to zero due to the sign introduced by \(\mathfrak{L} \left ( \GGamma ; e \right )\) in \eqnref{eqn:gaugeon-projection-operator} and the observation that \(\mathfrak{L} \left ( \llambda \GGamma \right ) = \mathfrak{L} \left ( \GGamma \right ) + 1\), with \(\GGamma \in \GQo\). Then, for the second part, let \(\ggamma\) be a cycle, i.e.\ \(\llambda \ggamma = 0\). We will now show that \(\ggamma\) is then already a boundary, i.e.\ there exists an element \(\boldsymbol{\gamma^\prime}\) such that \(\ggamma = \llambda \boldsymbol{\gamma^\prime}\). For this, we assume without loss of generality that \(\ggamma\) consists of linear combinations of graphs with \(m\) external edges and that it is homogeneous with respect to the gaugeon-grading, i.e.\ \(\mathfrak{L} \left ( \ggamma \right ) = n\), as only homogeneous graphs can cancel themselves anyway. In order to satisfy \(\llambda \ggamma = 0\), we need that \(\ggamma\) is the sum of \(m\)-divisible many graphs, such that every external gauge boson edge possesses one summand with an  \(\boldsymbol{L}\)-label attached to it. Additionally, depending on whether \(n\) is even or odd, we also need according signs for \(\ggamma\) to be in the kernel of \(\llambda\). But these are precisely the conditions for \(\ggamma\) being in the image of \(\llambda\) and thus all cohomology groups of the corresponding complex vanish.
\end{proof}

\enter

\begin{rem}
	After having introduced the general graph complex with trivial cohomology, we now add physics by implementing the cancellation identities of \sectionref{sec:formalized-cancellation-identities}. This will also introduce interesting cohomology on the graph complex, related to transversal linear combinations of Feynman graphs as we will discuss below.
\end{rem}

\enter

\begin{defn}[pBRST Feynman graph complex]
	Let \(\mathcal{C}_\Q\) be the \emph{gaugeon graph complex} from \defnref{defn:gaugeon-graph-complex} and \(\sim_\text{CI}\) be the \emph{cancelation identities relation} from \defnref{defn:cancellation-identities-relation}. Now, we define the \emph{pBRST graph complex} as the gaugeon graph complex modulo the cancellation identities relation, i.e.\ \(\PQ \coloneq \CQ / \sim_\text{CI}\).
\end{defn}

\enter

\begin{thm}[pBRST cohomology] \label{thm:pBRST-cohomology}
	The cohomology groups of the pBRST Feynman graph complex \(\, \PQ\) are generated as follows:
	\begin{equation}
		H^j \left ( \PQ, \mathbb{Z} \right ) \equiv \begin{cases} \left \langle \rescombgreenc^{i, j, k}_\mathbf{c} \right \rangle_{i, k, \mathbf{c}} & \text{if \(j\) is even} \\
		0 & \text{else} \end{cases}
	\end{equation}
	In particular, the generators are connected restricted combinatorial Green's functions with arbitrary number of external gauge bosons, matter particles and coupling-grading.
\end{thm}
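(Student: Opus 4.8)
The plan is to compute the cohomology of \(\PQ \coloneq \CQ / \sim_\textup{CI}\) by combining the acyclicity of the underlying gaugeon complex from \lemref{lem:lambda-cohomological-differential} with the master cancellation identity of \thmref{thm:gen-ci}, which is what controls the induced differential \(\llambda\) on the quotient. First I would observe that \(\llambda\) preserves the number of external matter edges \(k\), the coupling-grading \(\mathbf{c}\), and the total number \(i+j\) of external gauge-plus-ghost legs, while it raises the gaugeon-grading of \defnref{defn:gaugeon-grading} by one; the complex therefore splits as a direct sum of finite subcomplexes indexed by \((i+j, k, \mathbf{c})\), and it suffices to treat each such subcomplex separately. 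On the transversal generators (all gauge legs \(\boldsymbol{T}\)-labeled) the cohomological grading reduces to the number of external ghost legs, which is the index \(j\) appearing in the statement.

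Next I would determine how \(\llambda\) acts modulo \(\sim_\textup{CI}\). The operator \(\lambda_e\) turns a transversal external gauge edge into a longitudinal one, and the identities \eqnsref{eqns:cancellation-identitites} then resolve this longitudinal mode into, on the one hand, \emph{ghost-exit} contributions — which trade external gauge edges for external ghost edges and, by the first sum of the master identity, do so in pairs — and, on the other hand, \emph{on-shell cancellation} contributions \(\boldsymbol{\vert}_l\). Since the external legs are physical (transversal and on-shell), the latter vanish in the quotient, so that the effective differential on \(\PQ\) is precisely the ghost-creating part of \thmref{thm:gen-ci}.

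For closure, I would show that the transversal Green's functions \(\rescombgreenc^{i,j,k}_\mathbf{c}\) are cocycles. Applying \(\llambda\) and resolving through the master identity, the ghost-exit contribution obtained by projecting gauge edge \(e\) and letting the longitudinal mode leave through gauge edge \(l\) also arises, with \(e\) and \(l\) interchanged, from projecting \(l\) and exiting through \(e\); the sign \((-1)^{\mathfrak{L}(\GGamma; e)}\) of \eqnref{eqn:gaugeon-projection-operator} renders these two contributions mutually opposite, so they cancel in pairs and \(\llambda \rescombgreenc^{i,j,k}_\mathbf{c} \simeq_\textup{CI} 0\). This is the same antisymmetrisation that produced the acyclicity in \lemref{lem:lambda-cohomological-differential}, now made compatible with the ghost-pairing imposed by the cancellation identities.

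Finally, to identify the generators and extract the parity, I would lift an arbitrary cocycle of \(\PQ\) to the acyclic complex \(\CQ\), where by \lemref{lem:lambda-cohomological-differential} it is a coboundary, and then transport the resulting primitive back through \(\sim_\textup{CI}\): the master identity rewrites each longitudinal leg as ghost content, so that after finitely many reductions every cocycle becomes \(\sim_\textup{CI}\)-cohomologous to a linear combination of transversal Green's functions \(\rescombgreenc^{i,j,k}_\mathbf{c}\). The decisive feature is that the first sum of \thmref{thm:gen-ci} creates external ghost edges strictly in pairs; a prospective class carrying an odd number \(j\) of external ghosts therefore cannot be closed off consistently and is forced to be a coboundary, whereas for even \(j\) exactly the Green's functions survive, giving \(H^j \equiv \langle \rescombgreenc^{i,j,k}_\mathbf{c} \rangle_{i,k,\mathbf{c}}\) for even \(j\) and \(0\) otherwise. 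I expect this last step — making the \emph{lift, then reduce modulo \(\sim_\textup{CI}\)} argument precise while tracking the signs through the cancellation identities, and thereby pinning down both the generation statement and the vanishing in odd degree — to be the main obstacle, since it is here that the sign-weighted differential and the combinatorial structure of \thmref{thm:gen-ci} must be reconciled in full.
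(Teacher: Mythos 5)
Your proposal follows essentially the same route as the paper: the paper's entire proof of \thmref{thm:pBRST-cohomology} is the one-line observation that it ``follows directly'' from \thmref{thm:gen-ci}, \lemref{lem:lambda-cohomological-differential} and the general quotient construction, and these are exactly the ingredients you combine (cancellation identities resolving longitudinal modes into ghost pairs plus on-shell terms, acyclicity of the gaugeon complex, and reduction modulo \(\sim_\textup{CI}\)). Your write-up in fact supplies considerably more detail than the paper does --- including the grading decomposition, the pairwise sign cancellation for closedness, and the lift-then-reduce step you rightly flag as the delicate point --- so nothing in the paper's proof contradicts or goes beyond your plan.
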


\begin{proof}
	This follows directly from \thmref{thm:gen-ci}, \lemref{lem:lambda-cohomological-differential} and the general construction.
\end{proof}

\section{Derived renormalization theory} \label{sec:derived-renormalization-theory}

Following on the result of \thmref{thm:pBRST-cohomology}, we now construct the renormalization Hopf algebra on its cohomology groups, rather than on 1PI Feynman graphs. This amounts to a \emph{derived version} of \emph{Connes--Kreimer renormalization theory}. Furthermore, it also connects directly to the well-known \emph{Hopf subalgebras}, generated by combinatorial Green's functions, and thus to \emph{multiplicative renormalization} as we will this discuss in this section.

\enter

\begin{defn}[Derived renormalization Hopf algebra] \label{defn:derived-renormalization-theory}
	Let \(\Q\) be a (generalized) Quantum Gauge Theory with labeled, oriented and connected Feynman graph set \(\GQo\) and pBRST Feynman graph complex \(\CQ\). We denote by \(\HQd \equiv (\VQd, m, \one, \Delta, \coone, S)\) the derived renormalization Hopf algebra: Contrary to the ordinary construction in \defnref{defn:renormalization_hopf_algebra}, the algebra part is now generated via power series in the cohomology groups of \(\PQ\) instead of 1PI Feynman graphs, i.e.\ \(\VQd \coloneq \mathbb{Q} \big [ \mkern-1mu \bigl [ \rescombgreenc^{i, j, k}_\mathbf{c} \bigr ] \mkern-1mu \big ]_{i, j, k, \mathbf{c}}\), where we have used \thmref{thm:pBRST-cohomology}.
\end{defn}

\enter

\begin{col}[Derived renormalization theory and multiplicative renormalization] \label{col:derived-renormalization-theory}
	If \(\Q\) does not possess gauge anomalies, the construction of \defnref{defn:derived-renormalization-theory} is well-defined and consists of the Hopf subalgebras for multiplicative renormalization, cf.\ \defnref{defn:hopf_subalgebras_renormalization_hopf_algebra}.
\end{col}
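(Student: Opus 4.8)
The plan is to read this corollary as an identification of two Hopf algebras: the derived renormalization Hopf algebra \(\HQd\) of \defnref{defn:derived-renormalization-theory}, whose algebra part is the power series algebra on the cohomology generators of \(\PQ\), and the Hopf subalgebra of \(\HQ\) for multiplicative renormalization of \defnref{defn:hopf_subalgebras_renormalization_hopf_algebra}. By \thmref{thm:pBRST-cohomology} the cohomology generators are precisely the connected restricted combinatorial Green's functions \(\rescombgreenc^{i,j,k}_\mathbf{c}\), so the two algebras already share their generators by construction; the substance of the corollary is therefore that the coproduct \(\Delta\) inherited from \(\HQ\) closes on the subalgebra they generate and reproduces the multiplicative factorization of \eqnsref{eqn:hopf_subalgebras_multi-index}.

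First I would invoke \propref{prop:coproduct_cgreensfunctions}, whose \eqnref{eqn:coproduct_greensfunctions_restricted} reads
\[
	\Delta \bigl ( \rescombgreenc^r_{\mathbf{V}} \bigr ) = \sum_{\mathbf{v} \in \ZvQ} \eval{\left ( \overline{\combgreenc}^r \overline{\combcharge}^\mathbf{v} \right )}_{\mathbf{V} - \mathbf{v}} \otimes \rescombgreenc^r_{\mathbf{v}} \, .
\]
The right tensor factor is manifestly a cohomology generator. For the left factor I would use \lemref{lem:combcharges-combgreenc} to express \(\overline{\combcharge}^\mathbf{v}\) through connected combinatorial Green's functions, so that \(\eval{( \overline{\combgreenc}^r \overline{\combcharge}^\mathbf{v} )}_{\mathbf{V}-\mathbf{v}}\) becomes a polynomial in the restricted connected Green's functions — that is, exactly the polynomial \(\widehat{\mathfrak{P}}_{\mathbf{g}}\) required in \defnref{defn:hopf_subalgebras_renormalization_hopf_algebra}, upon setting \(\mathbf{g} = \mathbf{V}-\mathbf{v}\). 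This shows simultaneously that \(\Delta\) preserves the subalgebra (well-definedness of the coproduct on \(\HQd\)) and that it takes the demanded factorized form. Since the algebra is graded and connected, the counit and the recursively defined antipode then automatically restrict, so \(\HQd\) is a genuine Hopf subalgebra; I would translate between the coupling-grading \(\mathbf{c}\) labelling the generators and the vertex-grading \(\mathbf{V}\) of \propref{prop:coproduct_cgreensfunctions} via the grading dictionary of \defnref{defn:properties-feynman-graphs}, which is where the always-existence for the vertex-grading recorded in Remark~\ref{rem:hopf_subalgebras_renormalization_hopf_algebra} enters.

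The role of the no-anomaly hypothesis is to make the two inputs above simultaneously valid. Without anomalies the cancellation identities of \thmref{thm:gen-ci} hold exactly, so \thmref{thm:pBRST-cohomology} genuinely identifies the cohomology with the span of the \(\rescombgreenc^{i,j,k}_\mathbf{c}\); moreover the Slavnov--Taylor identities then hold, which is precisely what renders the \(Z\)-factors \(\counterterm{\combgreenc^r}\) of Remark~\ref{rem:hopf_subalgebras_renormalization_hopf_algebra} compatible with gauge invariance and hence the multiplicative-renormalization reading correct. I would also invoke the `associated renormalization Hopf algebra' caveat of \defnref{defn:renormalization_hopf_algebra} so that every cograph \(\Gamma / \gamma\) appearing is well-defined, and — for non-renormalizable theories such as (effective) Quantum General Relativity — pass to the projected generators using \eqnref{eqn:coproduct_greensfunctions_restricted_divergent} together with the divergence projection \(\Omega\) of \defnref{defn:projection_divergent_graphs}.

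The hard part will be the closure argument for the left tensor factors: one must verify that \(\eval{( \overline{\combgreenc}^r \overline{\combcharge}^\mathbf{v} )}_{\mathbf{V}-\mathbf{v}}\) really is a polynomial in the cohomology generators, and not merely in arbitrary connected graphs, since the projection \(\Omega\) and the square roots hidden inside the combinatorial charges \(\overline{\combcharge}^\mathbf{v}\) could in principle generate terms outside the subalgebra spanned by the \(\rescombgreenc^{i,j,k}_\mathbf{c}\). Making the anomaly-freeness hypothesis perform exactly this algebraic function, rather than leaving it as a physical aside, is the step demanding the most care; I expect it to reduce to checking that the grading restriction \(\eval{\,\cdot\,}_{\mathbf{G}-\mathbf{g}}\) commutes with the charge expansion of \lemref{lem:combcharges-combgreenc} grading by grading.
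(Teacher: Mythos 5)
Your overall skeleton matches the paper's: anomaly-freeness validates the cancellation identities, hence \thmref{thm:pBRST-cohomology} identifies the generators with the \(\rescombgreenc^{i,j,k}_{\mathbf{c}}\), and the coproduct identities of \propref{prop:coproduct_cgreensfunctions} (via \lemref{lem:combcharges-combgreenc}) supply the candidate factorization. However, the decisive step is exactly the one you defer as ``the hard part,'' and the mechanism you sketch for it would not work. The obstruction is not whether the grading restriction \(\eval{\,\cdot\,}_{\mathbf{V}-\mathbf{v}}\) commutes with the charge expansion --- that is a purely combinatorial bookkeeping fact, valid with or without anomalies, and it is what already gives the vertex-grading Hopf subalgebras unconditionally (cf.\ Remark~\ref{rem:hopf_subalgebras_renormalization_hopf_algebra}). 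The generators of \(\HQd\) are, however, graded by the \emph{coupling}-grading \(\mathbf{c}\), and for coupling-grading the factorization \eqref{eqn:hopf_subalgebras_multi-index} is in general false unless every fundamental interaction carries its own coupling constant --- which is precisely not the case in Quantum Yang--Mills theory or (effective) Quantum General Relativity. So translating from \(\mathbf{V}\) to \(\mathbf{c}\) ``via the grading dictionary'' is where your argument breaks, not where it closes.

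What the hypothesis of anomaly-freeness actually buys, and what the paper's proof uses, is that one works in the quotient by the cancellation identities relation (equivalently, on the pBRST cohomology): there the extracted polynomials \(\widehat{\mathfrak{P}}_{\mathbf{g}} \left ( \rescombgreenc^r_{\mathbf{G}} \right )\) are gauge invariant by construction, i.e.\ the cancellation identities of \thmref{thm:gen-ci} (the combinatorial Slavnov--Taylor identities) identify the combinatorial charges of the different vertices, so that the left tensor factors become well-defined polynomials in the coupling-graded generators even though several interactions share one coupling. Your proposal never deploys the cancellation identities in the coproduct argument at all --- you use them only to legitimize \thmref{thm:pBRST-cohomology} and add a physical aside about \(Z\)-factors --- so the closure of \(\Delta\) on \(\HQd\) in coupling-grading, which is the actual content of the corollary, remains unproven in your write-up.
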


\begin{proof}
	A gauge anomaly would correspond to a violation of the cancellation identities in the sense of \thmref{thm:gen-ci}: Thus, if \(\Q\) does not possess gauge anomalies, then \thmref{thm:pBRST-cohomology} holds and its cohomology groups are generated by connected restricted combinatorial Green's functions \(\rescombgreenc^{i, j, k}_\mathbf{c}\) with arbitrary number of external gauge bosons, matter particles and coupling-grading. Furthermore, a second consequence of \thmref{thm:gen-ci} is that the coproduct factors over said combinatorial Green's functions, because the extracted polynomials \(\widehat{\mathfrak{P}}_{\mathbf{g}} \left ( \rescombgreenc^r_{\mathbf{G}} \right )\) are gauge invariant by construction. This is normally not guaranteed for coupling-grading, cf.\ \cite[Corollary 5.8]{Prinz_3}.
\end{proof}

\enter

\begin{rem}
	Unfortunately, it seems not possible to construct directly a differential-graded renormalization Hopf algebra, i.e.\ start from the renormalization Hopf algebra \(\HQ\) and equip it with the differential \(\llambda\): This is due to the fact that the cancellation identities relation of \defnref{defn:cancellation-identities-relation} is not compatible with the extraction of subgraphs. However, on the level of cohomology groups this construction is fine, as they combine to a transversal linear combination of Feynman graphs by construction. Thus, \defnref{defn:derived-renormalization-theory} describes the correct setup to obtain a manifestly gauge invariant renormalization operation.
\end{rem}

\section{Conclusion} \label{sec:conclusion}

Starting from the \emph{Hopf algebraic renormalization} of Connes and Kreimer and the \emph{cancellation identities} of 't Hooft and Veltman, we have developed a further perspective on the renormalization of (generalized) gauge theories. To this end, we have reviewed and generalized the Connes--Kreimer renormalization theory for \emph{connected} Feynman graphs in \sectionref{sec:conn-hopf-alg-ren}. Then, in \sectionref{sec:formalized-cancellation-identities}, we have reviewed and formalized the cancellation identities of 't Hooft and Veltman. In particular, we have studied which contributions of the longitudinal contractions of gauge-vertices cancel against ghost-vertices and which contributions cancel against higher-valent vertices. This generality contains in particular Quantum Yang--Mills theory, but also (effective) Quantum General Relativity, which requires this generalized setup, cf.\ \exmpsaref{exmp:qym}{exmp:qgr}. Next, in \sectionref{sec:pBRST-Feynman-graph-complex}, we have implemented these formalized cancellation identities into a Feynman graph complex, which we call \emph{pBRST Feynman graph complex} due to its structural similarities with the BRST complex. Finally, in \sectionref{sec:derived-renormalization-theory}, we have combined Hopf algebraic renormalization theory with the pBRST Feynman graph complex by constructing a \emph{derived renormalization Hopf algebra} on the cohomology groups of the pBRST Feynman graph complex. In future work, we aim to study the relation of this construction to the BV formalism and want to apply it to UV completions of (effective) Quantum General Relativity.

\section*{Acknowledgments}
\addcontentsline{toc}{section}{Acknowledgments}

The author thanks David Carchedi, Jonah Epstein and Owen Gwilliam for illuminating and helpful discussions. Additionally, the author thanks Albrecht Klemm, Dirk Kreimer and Peter Teichner for general input and supportive mentoring. Finally, the author thanks the \emph{SwissMAP Research Station} in Les Diablerets for several enjoyable conference stays, where many of the key insights for the present work unfolded itself in the beautiful surroundings of the Swiss alps. This research is supported by a postdoctoral fellowship of the \emph{Max Planck Institute for Mathematics}.

\bibliography{References}{}
\bibliographystyle{babunsrt}

\end{document}